\title{Rejection Sampling with Vertical Weighted Strips}
\author{Andrew~M. Raim$^*$, James~A. Livsey, and Kyle~M. Irimata
\vspace{0.5em} \\
Center for Statistical Research and Methodology, U.S. Census Bureau
}
\date{}
\begin{document}

\maketitle

\begin{abstract}
A number of distributions that arise in statistical applications can be expressed in the form of a weighted density: the product of a base density and a nonnegative weight function. Generating variates from such a distribution may be nontrivial and can involve an intractable normalizing constant. Rejection sampling may be used to generate exact draws, but requires formulation of a suitable proposal distribution. To be practically useful, the proposal must both be convenient to sample from and not reject candidate draws too frequently. A well-known approach to design a proposal involves decomposing the target density into a finite mixture, whose components may correspond to a partition of the support. This work considers such a construction that focuses on majorization of the weight function. This approach may be applicable when assumptions for adaptive rejection sampling and related algorithms are not met. An upper bound for the rejection probability based on this construction can be expressed to evaluate the efficiency of the proposal before sampling. A method to partition the support is considered where regions are bifurcated based on their contribution to the bound. Several applications based on the von Mises Fisher distribution are presented to illustrate the framework.
\end{abstract}

\keywords{Majorization, Finite Mixture, Weighted Distribution, Convexity, Bayesian, von Mises Fisher}

\blfootnote{%
Disclaimer: This article is released to inform interested parties of ongoing research and to encourage discussion of work in progress. Any views expressed are those of the authors and not those of the U.S.~Census Bureau.
\begin{flushleft}
$^*$For correspondence: \\
Andrew~M. Raim (\url{andrew.raim@census.gov}) \\
Center for Statistical Research and Methodology \\
U.S.~Census Bureau \\
Washington, DC, 20233, U.S.A.
\end{flushleft}
}

\section{Introduction}
\label{sec:intro}

Consider a weighted distribution \citep{PatilRao1978} with density
\begin{align}
f(x) = f_0(x) / \psi, \quad
f_0(x) = w(x) g(x), \quad
\psi = \int_\Omega f_0(x) d\nu(x),
\label{eqn:weighted-target}
\end{align}
whose support is $\Omega$ and $\nu$ is an appropriate dominating measure. The function $g$ is assumed to be a normalized density for the base distribution. The weight function $w(x) \geq 0$ reweights density $g$ on support $\Omega$ in a systematic way. The normalizing constant $\psi$ is expressed relative to \eqref{eqn:weighted-target} such that $g$ itself is assumed to be normalized; $\psi$ may be intractable or impractical to compute. Distributions of the form \eqref{eqn:weighted-target} often arise as targets for which a random sample is desired. For example, in Bayesian analysis, such an $f$ frequently involves a posterior distribution of interest or one of its conditionals, with $g$ arising from a prior distribution on unknown parameter $\vec{\theta}$ and $w$ from a likelihood which depends on $\vec{\theta}$. The classical method of rejection sampling continues to be relevant when an exact draw is desired from the target, rather than a Markov chain whose invariant distribution is the target, and no other method to directly generate draws is apparent. This work revisits the method of vertical strips to construct proposal distributions for rejection sampling with weighted targets of the form \eqref{eqn:weighted-target}. The resulting method provides additional flexibility which may be useful in obtaining useful samplers with relatively little effort or very efficient samplers---in terms of computational burden and probability of rejection---with additional insight into the components.

Rejection sampling \citep{vonNeumann1951} generates variates from a target distribution by utilizing an envelope function which bounds the unnormalized target density from above. This approach samples from the area beneath the envelope, rejects draws which fall above the target density, and produces accepted draws which are independent and identically distributed from the target density \citep{RobertCasella2004, MartinoEtAl2018}. Many types of proposal densities have been introduced to form the envelope. One such method utilizes a stepwise proposal \citep{Ahrens1993, Ahrens1995, PangEtAl2002} which can be regarded as a construction by vertical strips \citep[Section~3.6.1]{MartinoLuengoMiguez2018}. \citet[Chapter VIII]{Devroye1986} discusses what is essentially the vertical strips method although non-adaptive and for log-concave densities. Another class of methods which is based on construction of a proposal by horizontal strips is referred to as the ziggurat method \citep{MarsagliaTsang2000}. This approach uses a set of rectangles to form the envelope, with the accuracy of the approximations improving as the number of rectangles increases \citep[Section~3.6.1]{MartinoLuengoMiguez2018}.

In practice, selection of an appropriate envelope for rejection sampling faces two main challenges: the envelope must be guaranteed to be an upper bound for the target density, and selection of too large an envelope will yield an inefficient sampler with many of the proposed draws rejected. Adaptive rejection sampling (ARS) addresses these challenges for log-concave targets by automatically adapting to the target using rejected draws, thus yielding an envelope that provides an increasingly tight bound \citep{GilksWild1992}. ARS and related methods typically construct envelopes based on the log-density of the target, composing linear components or other simple functions to ensure that rejection sampling is tractable \citep[Section~4.3]{MartinoLuengoMiguez2018}. The Adaptive Rejection Metropolis Sampling (ARMS) drops the log-concave restriction and uses a Metropolis step \citep{GilksBestTan1995}; however, it produces a chain of non-independent draws and the proposal is not guaranteed to converge to the target with adaptations. The adaptive independent sticky MCMC \citep{MartinoEtAl2018} extends ARMS using non-parametric adaptive proposal densities to reduce the computational burden and improve convergence. The Independent Doubly Adaptive Rejection Metropolis Sampling (IA2RMS) algorithm addresses the ARMS convergence issue and reduces dependence \citep{MartinoReadLuengo2015}. \citet{EvansSwartz1998} propose a sampler which relaxes the log-concavity requirement, and requires that a given transformation of the target density is concave. Another variation of this idea is the convex-concave ARS introduced by \citet{GorurTeh2011} which decomposes the target distribution into concave and convex functions. In addition to the above, many other adaptive rejection samplers have been introduced in the literature; \citet{MartinoLuengoMiguez2018} provide a summary of some of the most common methods.

One of the main challenges in adaptive rejection sampling lies in the construction of the sequence of proposal densities. 
The construction must satisfy three requirements: 
(1) must provide an upper bound for the target density for all $x$ in the domain, 
(2) must be possible to sample exactly from, and 
(3) must converge towards the target density as the number of support points goes to infinity. 
Satisfying these three criteria can be challenging, especially in the multivariate case \citep[Section~4.2]{MartinoLuengoMiguez2018}. 

In the case of weighted distributions of the form \eqref{eqn:weighted-target}, direct sampling as originally proposed by \citet{WalkerEtAl2011} offers an appealing alternative to rejection sampling. This approach reformulates the target by introducing an auxiliary variable; sampling sequentially from the marginal density of the auxiliary variable---which is monotonically nonincreasing with support on the unit interval---and the conditional density given the auxiliary variable may be more tractable than sampling from the original target distribution. \citet{DirectSamplingStep2023} utilizes a step function with the direct sampling approach to approximate the distribution of the auxiliary variable to a desired tolerance. The step function may also be used as an envelope for rejection sampling to obtain an exact sample. One challenge that arises with the direct sampling approach is that the distribution of the latent variable may be focused on an extremely narrow interval; therefore, computations must be implemented carefully to avoid loss of numerical precision. The methods in the present paper can be used in many of the same scenarios as the direct sampler from \citet{DirectSamplingStep2023} and are more straightforward to implement; moreover, it is demonstrated that direct sampling with a step function can be considered an special case of this approach.

This paper introduces the vertical weighted strips (VWS) method of constructing proposals for weighted densities. VWS is an extension of vertical strips that utilizes decomposition \eqref{eqn:weighted-target} of the target density as a weighted form; this provides flexibility to construct efficient proposal distributions which are also convenient for use in a rejection sampler. The method is based on finding an appropriate majorizer for the weight function; i.e., a function which serves as an upper bound. We present two specific variations: one utilizing a constant function on each subset in a partition of the support and the other using linearity to bound a weight function which is either log-convex or log-concave on each subset. Both approaches are demonstrated to yield practical samplers in several illustrations, with the linear majorizer achieving higher efficiency but requiring a conjugacy between the majorizer and the base distribution to be practical. Note that the VWS approach does not require that the target density itself is log-concave as in the original ARS algorithm. 

The rest of this paper is organized as follows. Section~\ref{sec:vertical-weighted-strips} reviews rejection sampling and introduces VWS as a method to construct proposals. Section~\ref{sec:design} discusses considerations in the design of a VWS proposal. Section~\ref{sec:vmf} highlights uses of VWS in several examples involving the von Mises Fisher (VMF) distribution. Section~\ref{sec:conclusions} gives concluding remarks. Appendices supply proofs of propositions and auxiliary material.

\section{Vertical Weighted Strips}
\label{sec:vertical-weighted-strips}

To generate draws from target $f$ in \eqref{eqn:weighted-target}, rejection sampling requires a proposal density $h(x) = h_0(x) / \psi_*$ whose normalizing constant is $\psi_* = \int_\Omega h_0(x) d\nu(x)$, and $M > 0$ is a ratio adjustment factor such that
\begin{align}
\sup_{x \in \Omega} f_0(x) / h_0(x) \leq M.
\label{eqn:ratio-adjustment}
\end{align}
A proposal consisting of variates $u$ and $x$ is generated from $\text{Uniform}(0,1)$ and $h$, respectively. The proposed $x$ is accepted as a draw from $f$ if $u \leq f_0(x) / \{ M \cdot h_0(x) \}$; otherwise, it is rejected and the process is repeated by redrawing $u$ and $x$. This procedure may be repeated $n$ times to obtain an independent and identically distributed sample $x_1, \ldots, x_n$ from $f$. A desirable choice of $h$ is one whose support contains $\Omega$, where $h_0$ is easy to evaluate, which is easy to draw variates from, and whose density is distributed in a manner not too different than $f$. With this formulation, it is routine to show that the probability of accepting a proposed $x$ with accompanying $u$ is
\begin{align}
\Prob\left( U \leq \frac{f_0(X)}{M h_0(X)} \right)
&= \frac{\psi}{M \psi_*},
\label{eqn:rejection-probability}
\end{align}
where $X \sim h$ and $U \sim \text{Uniform}(0,1)$, and that the distribution of an accepted draw is indeed the target distribution; i.e.,
\begin{math}
\Prob(X \in A \mid U \leq f_0(X) / \{M h_0(X)\})
= \int_A f(x) d\nu(x),
\end{math}
for any measureable set $A$ in $\Omega$. Let $S_i$ be the number of draws needed to accept the $i$th variate for $i = 1, \ldots, n$; $\sum_{i=1}^n S_i$ is a negative binomial random variable with probability of success $\psi / \{ M \psi_*\}$ and expected value $n M \psi_* / \psi$. It is apparent that the efficiency of a rejection sampler depends on the ratio of normalizing constants $\psi / \psi_*$ and the adjustment factor $M$. Improvements to efficiency may be possible when $h$ is parameterized, say, by $\vec{\vartheta}$, so that a small value of $M$ can be sought, i.e.,
\begin{math}
M = \inf_{\vec{\vartheta}} \Big\{ \sup_{x \in \Omega} f_0(x) / h_0(x \mid \vec{\vartheta}) \Big\}.
\end{math}

Taking weighted distribution $f(x) \propto w(x) g(x)$ as the target, \eqref{eqn:ratio-adjustment} suggests a particular class of proposals of the form
\begin{math}
h_0(x) = \overline{w}(x) g(x)
\end{math}
for some function $\overline{w}$ which majorizes the weight function $w$ on $\Omega$. That is, $\overline{w}(x) \geq w(x)$ for all $x \in \Omega$. By this construction, $f_0(x) \leq h_0(x)$ for all $x \in \Omega$ so that the adjustment ratio $M$ may be taken to be 1. We may anticipate that the rejection rate $1 - \psi / \psi_*$ will be lower when $\overline{w}$ is closer to $w$. However, we must also be able to readily generate variates from the resulting distribution $h$ for it to be useful as a proposal. It may also be desirable to sequentially refine $\overline{w}$ to be closer to $w$, perhaps at the cost of additional computation and/or labor by the practitioner.

In particular, consider partitioning $\Omega$ into $N$ disjoint regions $\mathscr{D}_1, \ldots, \mathscr{D}_N$, and suppose there are corresponding functions $\overline{w}_j$ such that
\begin{math}
\overline{w}_j(x) \geq w(x)$ for all $x \in \mathscr{D}_j
\end{math}
for each region $j = 1, \ldots, N$. The choice of majorizer
\begin{math}
\overline{w}(x) = \sum_{j=1}^N \overline{w}_j(x) \ind(x \in \mathscr{D}_j)
\end{math}
yields
\begin{math}
h_0(x) = g(x) \sum_{j=1}^N \overline{w}_j(x) \ind(x \in \mathscr{D}_j).
\end{math}
Define
\begin{math}
\overline{\xi}_j = \int_{\mathscr{D}_j} \overline{w}_j(x) g(x) d\nu(x)
\end{math}
and let $\psi_N = \sum_{j=1}^N \overline{\xi}_j$. The normalized proposal $h$ is a finite mixture 
\begin{align*}
h(x) &= h_0(x) / \psi_N
%
%
= \sum_{j=1}^N \pi_j g_j(x),
\end{align*}
whose component densities
\begin{math}
g_j(x) = 
\overline{w}_j(x) g(x) \ind(x \in \mathscr{D}_j) / \overline{\xi}_j
\end{math}
are truncated and reweighted variants of base distribution $g$ and whose mixing weights are $\pi_j = \overline{\xi}_j / \{ \sum_{\ell=1}^N \overline{\xi}_\ell \}$. The dependence on $N$ in the notation $\psi_N$ is emphasized for the upcoming discussion, but it is understood that other terms in the formulation of the proposal depend on $N$ as well. We will refer to the rejection sampling method with proposal $h$ as vertical weighted strips. Generating a variate from $h$ can be accomplished using its finite mixture formulation by drawing index $j$ from a discrete distribution on values $1, \ldots, N$ with probabilities $\pi_1, \ldots, \pi_N$, then drawing $x$ from the truncated and reweighted base distribution $g_j$.

From \eqref{eqn:rejection-probability}, the probability that a draw from a VWS proposal is rejected is $1 - \psi / \psi_N$. Minimizing this probability is ideal from the perspective of avoiding rejections, but moderate values such as $1/2$ may be satisfactory in many applications. To reduce this probability to a suitable level, a practitioner can take actions such as refining $\mathscr{D}_1, \ldots, \mathscr{D}_N$ into a finer partition, refactoring the weight/base decomposition, or considering different classes of majorizers for $\overline{w}_j$. When the normalizing constant $\psi$ is intractable, an upper bound for $1 - \psi / \psi_N$ can be considered instead. Efforts to formulate the proposal may focus on controlling the bound which will ensure that $1 - \psi / \psi_N$ is also controlled. Suppose $\underline{w}_j$ is a minorizing function of $w$ so that
\begin{math}
0 \leq \underline{w}_j(x) \leq w(x)$ for all $x \in \mathscr{D}_j,
\end{math}
$j = 1, \ldots, N$, and let $\underline{\xi}_j = \int_{\mathscr{D}_j} \underline{w}_j(x) g(x) d\nu(x)$. Propositions~\ref{result:rejection-probability-bound-fmm} and \ref{result:total-variation} are straightforward to prove but are stated as results because of their utility. Proofs are given in Appendix~\ref{sec:proofs}.
\begin{proposition}
\label{result:rejection-probability-bound-fmm}
Under VWS, the probability \eqref{eqn:rejection-probability} of a proposed draw being rejected is bounded above by
\begin{align}
\textstyle 1 - \left\{ \sum_{j=1}^N \underline{\xi}_j \right\} \Big/ \left\{ \sum_{j=1}^N \overline{\xi}_j \right\}.
\label{eqn:bound}
\end{align}
\end{proposition}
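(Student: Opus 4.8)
The plan is to start from the exact expression \eqref{eqn:rejection-probability} for the acceptance probability, which under VWS equals $\psi / \psi_N$ since the adjustment factor $M$ is taken to be $1$ by construction (recall $\psi_* = \psi_N$ here). Thus the rejection probability is exactly $1 - \psi/\psi_N = 1 - \psi \big/ \{\sum_{j=1}^N \overline{\xi}_j\}$, and it suffices to show that $\psi \geq \sum_{j=1}^N \underline{\xi}_j$, since then the fraction being subtracted only decreases when $\psi$ is replaced by the smaller quantity $\sum_{j=1}^N \underline{\xi}_j$.

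First I would write $\psi = \int_\Omega w(x) g(x)\, d\nu(x)$ and decompose the integral over the partition, $\psi = \sum_{j=1}^N \int_{\mathscr{D}_j} w(x) g(x)\, d\nu(x)$, which is valid because $\mathscr{D}_1, \ldots, \mathscr{D}_N$ are disjoint with union $\Omega$. Next, on each region $\mathscr{D}_j$ the minorization hypothesis gives $\underline{w}_j(x) \leq w(x)$ pointwise, and since $g \geq 0$, monotonicity of the integral yields $\int_{\mathscr{D}_j} \underline{w}_j(x) g(x)\, d\nu(x) \leq \int_{\mathscr{D}_j} w(x) g(x)\, d\nu(x)$, i.e., $\underline{\xi}_j \leq \int_{\mathscr{D}_j} w(x) g(x)\, d\nu(x)$. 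Summing over $j$ gives $\sum_{j=1}^N \underline{\xi}_j \leq \psi$.

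Finally I would combine these: since $\psi_N = \sum_{j=1}^N \overline{\xi}_j > 0$, dividing the inequality $\sum_{j=1}^N \underline{\xi}_j \leq \psi$ by $\psi_N$ preserves the direction, so $\big\{\sum_{j=1}^N \underline{\xi}_j\big\} \big/ \big\{\sum_{j=1}^N \overline{\xi}_j\big\} \leq \psi/\psi_N$, and hence $1 - \psi/\psi_N \leq 1 - \big\{\sum_{j=1}^N \underline{\xi}_j\big\} \big/ \big\{\sum_{j=1}^N \overline{\xi}_j\big\}$, which is precisely \eqref{eqn:bound}.

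Honestly there is no real obstacle here; the result is, as the authors say, straightforward. The only things worth a sentence of care are noting that $M = 1$ under the VWS construction (so that \eqref{eqn:rejection-probability} delivers $\psi/\psi_N$ cleanly), confirming $\psi_N > 0$ so the division is legitimate, and being explicit that $g \geq 0$ is what lets pointwise minorization pass through the integral. If one wanted to be thorough one could also remark that the same argument with $\overline{w}_j$ in place of $\underline{w}_j$ shows $\psi \leq \psi_N$, so the bound in \eqref{eqn:bound} is a genuine probability in $[0,1)$, but that is not strictly needed for the statement.
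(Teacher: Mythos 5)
Your proof is correct and follows essentially the same route as the paper's: both reduce the claim to the single inequality $\sum_{j=1}^N \underline{\xi}_j \leq \psi$, obtained by decomposing $\psi$ over the partition and replacing $w$ by its minorizer $\underline{w}_j$ on each $\mathscr{D}_j$ (the paper phrases this as bounding $\int_\Omega [h_0(x)-f_0(x)]\,d\nu(x)$ from above, which is the same step rearranged). Your added remarks on $M=1$, $\psi_N>0$, and $g\geq 0$ are fine but not points of divergence.
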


\begin{remark}
\label{remark:trivial-minorizer}
Using the trivial minorizer $\underline{w}_j(x) = w(x) \cdot \ind(x \in \mathscr{D}_j)$ yields $\psi \equiv \sum_{j=1}^N \underline{\xi}_j$ so that \eqref{eqn:bound} is equivalent to the actual rejection probability $1 - \psi/\psi_N$.
\end{remark}

The rejection probability $1 - \psi / \psi_N$ may also be interpreted as a relative error in approximating the normalizing constant $\psi$ by the normalizing constant $\psi_N$. If the distribution $h$ can be designed in such a way that this relative error is small, the following result shows that probabilities computed under the proposal will be close to probabilities computed under the target. This suggests that $h$ may be useful directly as an approximation to $f$, aside from rejection sampling.

\begin{proposition}
\label{result:total-variation}
Let $\mathcal{B}$ denote the collection of measurable subsets of $\Omega$, $X \sim f$, and $\tilde{X} \sim h$; then
\begin{align}
\sup_{B \in \mathcal{B}} \left| \Prob(\tilde{X} \in B) - \Prob(X \in B) \right|
\leq 1 - \frac{\psi}{\psi_N}.
\label{eqn:rectangles}
\end{align}

\end{proposition}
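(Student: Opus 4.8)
The plan is to recognize the left-hand side as the total variation distance between $f$ and $h$ and to control it using the one pointwise inequality that majorization supplies, namely $f_0 \le h_0$ on $\Omega$.

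First I would record the elementary representation of total variation: for probability densities $f$ and $h$ with respect to $\nu$,
\[
\sup_{B \in \mathcal{B}} \left| \Prob(\tilde{X} \in B) - \Prob(X \in B) \right|
= 1 - \int_\Omega \min\{ f(x), h(x) \} \, d\nu(x).
\]
This holds because $\Prob(\tilde{X} \in B) - \Prob(X \in B) = \int_B (h - f)\, d\nu$ is maximized over $\mathcal{B}$ by the set $B^\star := \{ x \in \Omega : h(x) \ge f(x) \}$; the complementary set maximizes the negative part; and the two extrema agree in absolute value since $\int_\Omega (h - f)\, d\nu = 0$. Writing the common value as $\int_{B^\star} (h - f)\, d\nu$ and rearranging in terms of $\min\{f,h\}$ gives the displayed identity. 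I would state this as a standard fact with a one-line justification rather than belabor it.

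Next I would establish the key pointwise bound $\frac{\psi}{\psi_N} f(x) \le h(x)$ for all $x \in \Omega$. Since $\overline{w}$ majorizes $w$ on $\Omega$, we have $f_0(x) = w(x) g(x) \le \overline{w}(x) g(x) = h_0(x)$ for every $x$. Dividing by $\psi_N$ and substituting $f(x) = f_0(x)/\psi$ and $h(x) = h_0(x)/\psi_N$ (so that $\psi_N = \int_\Omega h_0 \, d\nu$ makes $h$ a density) yields $\frac{\psi}{\psi_N} f(x) = f_0(x)/\psi_N \le h_0(x)/\psi_N = h(x)$. Integrating $f_0 \le h_0$ over $\Omega$ also gives $\psi \le \psi_N$, hence $\psi/\psi_N \in (0,1]$ and therefore $\frac{\psi}{\psi_N} f(x) \le f(x)$ as well. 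Combining the two inequalities, $\frac{\psi}{\psi_N} f(x) \le \min\{ f(x), h(x) \}$ for all $x \in \Omega$.

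Finally I would integrate this inequality, obtaining $\int_\Omega \min\{ f, h \}\, d\nu \ge \frac{\psi}{\psi_N} \int_\Omega f \, d\nu = \psi/\psi_N$, and substitute into the total variation identity from the first step to conclude that the supremum is at most $1 - \psi/\psi_N$. I do not expect a genuine obstacle here — the result is, as noted, straightforward — but the one point requiring care is the first step (correctly identifying the maximizing set $B^\star$ and the $\min\{f,h\}$ form), together with bookkeeping of the normalizing constants so that the single inequality $f_0 \le h_0$ is invoked exactly where it is needed.
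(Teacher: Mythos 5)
Your proof is correct. It takes a mildly different route from the paper's: you invoke the standard identity $\sup_{B}|\Prob(\tilde{X}\in B)-\Prob(X\in B)| = 1-\int_\Omega \min\{f,h\}\,d\nu$ and then reduce everything to the single pointwise inequality $\tfrac{\psi}{\psi_N}f(x)\le \min\{f(x),h(x)\}$, which follows from $f_0\le h_0$ and $\psi\le\psi_N$; this is exactly the classical ``if $c\,f\le h$ pointwise then the total variation distance is at most $1-c$'' argument with $c=\psi/\psi_N$. The paper instead fixes an arbitrary measurable $B$ and bounds the two signed differences $\int_B h - \int_B f$ and $\int_B f - \int_B h$ separately, using $h_0\ge f_0$ for the first and $1/\psi \ge 1/\psi_N$ for the second, without ever identifying the extremal set. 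Both arguments rest on the same two facts (pointwise majorization and $\psi\le\psi_N$) and yield the identical bound; yours is slightly more compact once the total variation identity is granted, while the paper's is self-contained and avoids appealing to that identity. No gaps.
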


The remainder of the paper will focus on the case where $f$ is a univariate target. Here, $\Omega$ is a subset of the real line, and we will further assume that regions $\mathscr{D}_j$ take the form of intervals $(\alpha_{j-1}, \alpha_j]$ for $j = 1, \ldots, N$, and $\Omega \equiv (\alpha_0, \alpha_N]$.

\section{Proposal Design}
\label{sec:design}

A VWS proposal $h$ is a finite mixture of reweighted and truncated variants of the base density $g$ on the partition $\mathscr{D}_1, \ldots, \mathscr{D}_N$; several elements of its design should be considered for it to be useful in practice. One must decide on a decomposition of the target $f$ into $w$ and $g$, select a majorizer $\overline{w}$ and minorizer $\underline{w}$, and determine a method to refine the proposal if $N=1$ is not sufficient.

It is desirable that $h$ generates draws with a low rejection probability. This can be achieved using a majorizer $\overline{w}$ which is a close upper bound to the original $w$. However, it is crucial that the proposal be practical to formulate and compute, with mixture components $g_j$ easy to draw from and constants $\overline{\xi}_j$ and $\underline{\xi}_j$ easy to compute. When refining the proposal, it is desirable to avoid creating many regions of low relevance: those where $\overline{\xi}_j$ is small relative to the sum $\sum_{\ell=1}^N \overline{\xi}_\ell$ will rarely be utilized in sampling. Computational effort required to determine the partitioning must be factored into the overall workload. A less optimal partitioning may be preferred if the resulting acceptance rate is satisfactory and rejections are cheap, as opposed to a more optimal partitioning that requires significantly more time and computation to prepare.

The decomposition of $f$ into $w$ and $g$ in \eqref{eqn:weighted-target} is not unique. A natural starting point is to identify the functional form of $g$ by inspecting $f$ and taking $w$ as the remaining factor. However, for any function $q$ which is positive on $\Omega$,
\begin{align}
f_0(x)
= w(x) \frac{1}{q(x)} \cdot q(x) g(x)
= \tilde{w}(x) \tilde{g}(x),
\label{eqn:refactor}
\end{align}
so that another valid base density is proportional to $\tilde{g}(x) = q(x) g(x)$ with $\tilde{w}(x) = w(x) / q(x)$ taken as the weight function. The role of $q$ is reminiscent of the instrumental distribution in importance sampling \citep[e.g.][Chapter~3]{RobertCasella2004}. We may select $q$ to facilitate majorization of $w$ and obtain a practical form of $g_j$. The choice of a nontrivial $q$ can also be helpful in a situation where the density of $g$ is far removed from that of $f$ in extreme cases where the two distributions are practically on disjoint subsets of $\Omega$; otherwise, numerical issues may arise in this situation due to the very small probabilities involved. Transformations of $f$ utilizing a Jacobian may also be considered to facilitate sampler design.

\subsection{Constant Majorizer}
\label{sec:vws-constant}

The following example demonstrates that the standard vertical strips (VS) method is a special case of VWS.

\begin{example}[Vertical Strips]
\label{example:vertical-strips}
If the support $\Omega$ is bounded and $w$ is finite on $\Omega$, take $w(x) = f_0(x)$ to be the entire unnormalized target and $g$ to be the uniform distribution on $\Omega$. The proposal $h$ is a finite mixture of uniform densities
\begin{math}
g_j(x) \propto \ind(x \in \mathscr{D}_j)
\end{math}
with mixing weights based on
\begin{math}
\overline{\xi}_j = \overline{w}_j \Prob(T \in \mathscr{D}_j),
\end{math}
where $T$ is a uniform random variable on $\mathscr{D}_j$. The choice of minorizer $\underline{w}_j = \min_{x \in \mathscr{D}_j} w(x)$ yields
\begin{math}
\underline{\xi}_j = \underline{w}_j \Prob(T \in \mathscr{D}_j).
\end{math}
\end{example}

Therefore, VS represents a particular decomposition of $f$ into $w$ and $g$. The resulting mixture-of-uniforms proposal is quite practical, especially in the univariate case. Introducing the flexibility to choose other decompositions can facilitate the development of more efficient proposals. Furthermore, when $g$ is not restricted to the uniform distribution, it is possible to remove the assumption that $\Omega$ is bounded. Example~\ref{example:vws-constant} describes a variation of VWS which assumes constants for the majorizer and minorizer as in VS but permits $w$ to be a choice other than $f_0$.

\begin{example}[Constant Majorizer]
\label{example:vws-constant}

When $w(x) < \infty$ for $x \in \mathscr{D}_j$, a choice for the majorizer is the constant $\overline{w}_j = \sup_{x \in \mathscr{D}_j} w(x)$. Here, mixture components of proposal $h$ are
\begin{math}
g_j(x) = g(x) \ind(x \in \mathscr{D}_j) / \Prob(T \in \mathscr{D}_j)
\end{math}
and mixing weights are based on $\overline{\xi}_j = \overline{w}_j \Prob(T \in \mathscr{D}_j)$ with $T \sim g$. Taking the minorizer to be $\underline{w}_j = \inf_{x \in \mathscr{D}_j} w(x)$ yields $\underline{\xi}_j = \underline{w}_j \Prob(T \in \mathscr{D}_j)$.
\end{example}

VWS with constant majorizers and minorizers is amenable to being coded in software; given code to evaluate $w$ and implement several operations for distribution $g$ such as the CDF, quantile function, and the density, many of the remaining operations of the VWS proposal can be automated. Sampling from univariate $h$ can be achieved using the inverse CDF method as described in Appendix~\ref{sec:sampling-univariate-proposal}. Numerical optimization can be used to identify $\overline{w}_j$ and $\underline{w}_j$ on each region; in this work we consider several standard methods discussed in Appendix~\ref{sec:constant-majorizer-details}. It is desirable to obtain closed-form solutions when possible because the process of refining $h$ can involve many such optimizations.

The following example demonstrates that the rejection sampler proposed by \citet{DirectSamplingStep2023}, based on the direct sampling algorithm of \citet{WalkerEtAl2011}, can be seen as a special case of VWS.

\begin{example}[Direct Sampler]
\label{example:direct-sampler}
Suppose in \eqref{eqn:weighted-target} that $w$ is finite with $c := \max_{x \in \Omega} w(x)$ and $Z$ is a random variable with $[Z \mid X = x] \sim \text{Uniform}(0, c / w(x))$. Then the joint density of $(X, Z)$ is the product of conditional
\begin{math}
f(x \mid z) \propto g(x) \ind\{ w(x) > z c \}
\end{math}
and marginal
\begin{math}
p(z) = \frac{c}{\psi} p_0(z),
\end{math}
with $p_0(z) = \int \ind\{ w(x) > z c \} g(x) d \nu(x)$. Therefore, a draw from the target $f$ can be obtained by first drawing $z$ from $p$ then $x$ from $f(x \mid z)$. To sample from density $p$, which is non-increasing on the support $[0,1]$, \citet{DirectSamplingStep2023} proposes a step function which majorizes $p_0$ to serve as an envelope for rejection sampling. This is essentially an application of VS from Example~\ref{example:vertical-strips} to density $p$. Equally spaced knots on $[0,1]$ may not yield an effective majorizer when $p$ is positive only within a small neighborhood of zero; better choices for knots take this into account. Because $p$ is non-increasing, the strategy of \citet{Hormann2002} to construct regions having equal probabilities can be considered.
\end{example}

\subsection{Linear Majorizer}
\label{sec:vws-linear}
A summation $h_0$ of constant functions may not produce the most efficient envelope for the target. Perhaps the next natural step is to instead consider linear functions. Here we describe such a construction which is possible when $\Omega$ can be partitioned into regions $\mathscr{D}_j$ where $w$ is either log-convex or log-concave.

A linear majorizer and minorizer are expressed as 
\begin{math}
\overline{w}_j(x) = \exp\{ \overline{\beta}_{0j} + x \overline{\beta}_{1j} \}
\end{math}
and
\begin{math}
\underline{w}_j(x) = \exp\{ \underline{\beta}_{0j} + x \underline{\beta}_{1j} \},
\end{math}
respectively. Suppose $w(x)$ is finite and log-concave on $\mathscr{D}_j = (\alpha_{j-1}, \alpha_j]$; then for $c \in \mathscr{D}_j$,
\begin{align}
\log w(x) &\leq \log w(c) + (x - c) \nabla(c)
\equiv \overline{\beta}_{j0} + \overline{\beta}_{j1} \cdot x
\label{eqn:linear-majorizer}
\end{align}
where $\overline{\beta}_{j0} = \log w(c) - c \cdot \nabla(c)$, $\overline{\beta}_{j1} = \nabla(c)$, and $\nabla(x) = \frac{d}{dx} \log w(x)$. Therefore, the function $\overline{w}_j(x) = \exp\{ \overline{\beta}_{j0} + \overline{\beta}_{j1} \cdot x \}$ is a majorizer for $w(x)$ on $\mathscr{D}_j$. Note that the constant term $\exp\{ \overline{\beta}_{j0} \}$ cancels from the density $g_j$ upon normalization but is needed in formulating $\overline{\xi}_j$ so that the unnormalized $h_0$ majorizes $f_0$. A possible choice of $c$ is considered in Appendix~\ref{sec:linear-majorizer-details}.

Suppose $\log w(x)$ is finite at the endpoints $\alpha_{j-1}$ and $\alpha_j$ of $\mathscr{D}_j$. To obtain an accompanying minorizer, $x$ may be expressed as a convex combination of the endpoints as $x = (1 - \lambda) \alpha_{j-1} + \lambda \alpha_j$ with $\lambda \in [0,1]$ so that $\lambda = (x - \alpha_{j-1}) / (\alpha_j - \alpha_{j-1})$. Concavity of $\log w(x)$ gives
\begin{align}
\log w(x) &\geq (1-\lambda) \log w(\alpha_{j-1}) + \lambda \log w(\alpha_j) \nonumber \\
&= \log w(\alpha_{j-1}) + \frac{x - \alpha_{j-1}}{\alpha_j - \alpha_{j-1}} [ \log w(\alpha_j) - \log w(\alpha_{j-1})] \nonumber \\
&= \underline{\beta}_{j0} + \underline{\beta}_{j1} \cdot x,
\label{eqn:linear-minorizer}
\end{align}
with
\begin{math}
\underline{\beta}_{j0} = \log w(\alpha_{j-1}) - \alpha_{j-1} \underline{\beta}_{j1}
\end{math}
and
\begin{math}
\underline{\beta}_{j1} = \{\log w(\alpha_j) - \log w(\alpha_{j-1}) \} / \{ \alpha_j - \alpha_{j-1} \},
\end{math}
so that the function $\underline{w}_j(x) = \exp\{ \underline{\beta}_{j0} + \underline{\beta}_{j1} \cdot x \}$ is minorizer for $w(x)$ on $\mathscr{D}_j$. In the case that $w$ is log-convex rather than log-concave, the majorizer and minorizer in \eqref{eqn:linear-majorizer} and \eqref{eqn:linear-minorizer} switch roles.

Examples of a constant and linear majorizer are displayed in Figure~\ref{fig:weight-functions}. The linear majorizer typically achieves much lower rejection probability than the constant majorizer as $N$ is increased but requires more effort to program; operations for the sampler typically need to be coded anew for each new problem. 

\begin{figure}
\centering
\begin{subfigure}{0.40\textwidth}
\includegraphics[width=\textwidth]{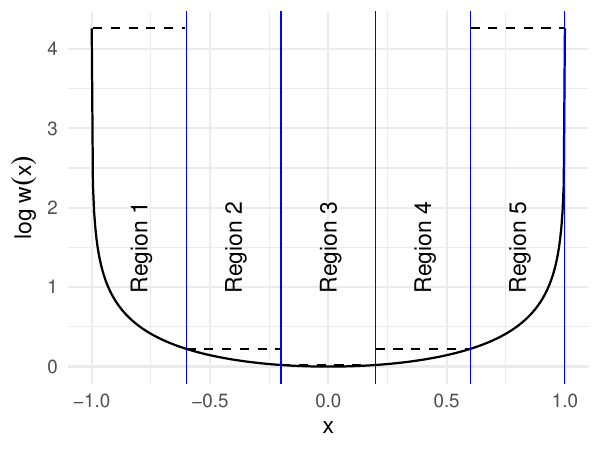}
\caption{}
\label{fig:weight-vws-const}
\end{subfigure}
\begin{subfigure}{0.40\textwidth}
\includegraphics[width=\textwidth]{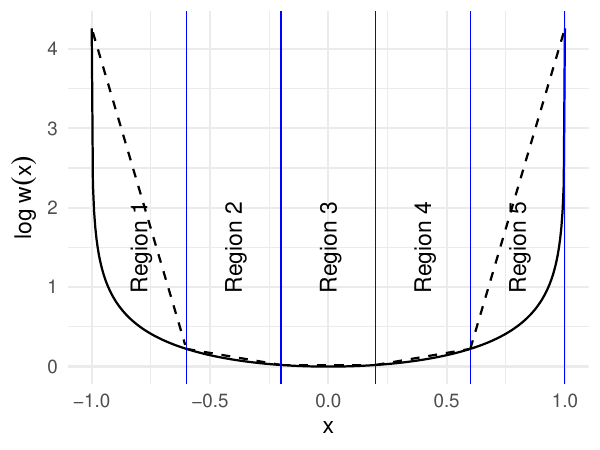}
\caption{}
\label{fig:weight-vws-linear}
\end{subfigure}
\caption{Examples of (\subref{fig:weight-vws-const}) constant and (\subref{fig:weight-vws-linear}) linear majorizers (dashed) of a weight function $w$ (solid).
}
\label{fig:weight-functions}
\end{figure}

In this work, we assume a common form of majorizer and minorizer over the regions $\mathscr{D}_1, \ldots, \mathscr{D}_N$ for a proposal $h$. This is done for convenience and to facilitate implementation of code, but is not a requirement of the methodology. While the choice of decomposition into $w$ and $g$ will typically be fixed within a proposal, majorizer and minorizer forms for $w$ can vary across regions with appropriate bookkeeping in the implementation.

The following examples present cases where a linear majorizer and minorizer yield practical proposals and are useful in Section~\ref{sec:vmf}.

\begin{example}[Exponential Family Base with Linear Majorizer]
\label{example:expfam}
Suppose $w$ is log-convex or log-concave and base distribution $g$ has exponential family density $g(x) = \exp\{ \vartheta x - a(\vartheta) \}$ with respect to dominating measure $\nu$ and $\vartheta \in \mathbb{R}$. Majorizer \eqref{eqn:linear-majorizer} and minorizer \eqref{eqn:linear-minorizer} yield, respectively,
\begin{align*}
&\overline{\xi}_j
= \exp\left( \overline{\beta}_{j0} \right)
\int_{\mathscr{D}_j} \exp\{ (\overline{\beta}_{j1} + \vartheta) x - a(\vartheta) \} d\nu(x), \\
&\underline{\xi}_j
= \exp\left( \underline{\beta}_{j0} \right)
\int_{\mathscr{D}_j} \exp\{ (\underline{\beta}_{j1} + \vartheta) x - a(\vartheta) \} d\nu(x).
\end{align*}
Here, proposal mixture components
\begin{align*}
g_j(x) &= \frac{
\exp\{ (\overline{\beta}_{j1} + \vartheta) x - a(\vartheta) \}
}{
\int_{\mathscr{D}_j} \exp\{ (\overline{\beta}_{j1} + \vartheta) s - a(\vartheta) \} d\nu(s)
} \ind(x \in \mathscr{D}_j)
\end{align*}
are members of the same family as $g$, but truncated to the intervals $(\alpha_{j-1}, \alpha_j]$.
\end{example}

\begin{example}[Doubly-Truncated Exponential Base with Linear Majorizer]
\label{example:doubly-truncated-exp}
Let $X \sim \text{Exp}_{(a,b)}(\kappa)$ denote a random variable with doubly truncated exponential distribution whose density is
\begin{align*}
g(x) = \frac{\kappa e^{\kappa x}}{e^{\kappa b} - e^{\kappa a}} \cdot \ind(a < x < b),
\end{align*}
where $-\infty < a < b < \infty$ and rate $\kappa$ may be any real number. Draws from $\text{Exp}_{(a,b)}(\kappa)$ may be generated with the inverse CDF method, where the CDF and associated quantile function are, respectively,
\begin{align}
&G(x) = \frac{e^{\kappa x} - e^{\kappa a}}{e^{\kappa b} - e^{\kappa a}}, \quad x \in (a,b),
\label{eqn:vmf-base-cdf} \\
&G^{-1}(\varphi) = \frac{1}{\kappa} \log\left[e^{\kappa a} + \varphi (e^{\kappa b} - e^{\kappa a}) \right], \quad \varphi \in [0,1].
\label{eqn:vmf-base-quantile}
\end{align}
Consider using $g$ as a base distribution with majorizer \eqref{eqn:linear-majorizer} and minorizer \eqref{eqn:linear-minorizer}; expressions obtained in Example~\ref{example:expfam} for exponential families become
\begin{align*}
&\overline{\xi}_j
= \frac{\kappa \exp\{ \overline{\beta}_{j0} \}}{(\kappa + \overline{\beta}_{j1}) (e^\kappa - e^{-\kappa})} \left\{
\exp\{(\kappa + \overline{\beta}_{j1}) \alpha_j\} - \exp\{(\kappa + \overline{\beta}_{j1}) \alpha_{j-1} \}
\right\}, \\
&\underline{\xi}_j
= \frac{\kappa \exp\{ \underline{\beta}_{j0} \}}{(\kappa + \underline{\beta}_{j1}) (e^\kappa - e^{-\kappa})} \left\{
\exp\{(\kappa + \underline{\beta}_{j1}) \alpha_j\} - \exp\{(\kappa + \underline{\beta}_{j1}) \alpha_{j-1} \}
\right\}.
\end{align*}
The $j$th component of finite mixture $h$ is
\begin{align*}
g_j(x)
&= \frac{
(\kappa + \overline{\beta}_{j1}) \exp\{(\kappa + \overline{\beta}_{j1}) x\}
}{
\exp\{(\kappa + \overline{\beta}_{j1}) \alpha_j\} - \exp\{(\kappa + \overline{\beta}_{j1}) \alpha_{j-1} \}
}
\cdot \ind(\alpha_{j-1} < x \leq \alpha_j),
\end{align*}
which is the density of
\begin{math}
T \sim \text{Exp}_{(\alpha_{j-1}, \alpha_j]}(\kappa + \overline{\beta}_{j1}).
\end{math}
\end{example}

\begin{example}[Uniform Base with Linear Majorizer]
\label{example:uniform}
Suppose $w$ is either log-convex or log-concave on each $\mathscr{D}_j$ and base distribution $g$ has uniform density $g(x) = \ind(x \in [a,b]) / (b - a)$ so that $\Omega = [a, b]$ is also the support of the target. This corresponds to the VS setting in Example~\ref{example:vertical-strips} for which we may also consider use of a linear majorizer and minorizer. Majorizer \eqref{eqn:linear-majorizer} and minorizer \eqref{eqn:linear-minorizer} give, respectively,
\begin{align*}
&\overline{\xi}_j
= \frac{
\exp\{ \overline{\beta}_{j0} \}
}{
(b - a) \overline{\beta}_{j1}
}
\Big(
\exp\{ \overline{\beta}_{j1} \cdot \alpha_j \} -
\exp\{ \overline{\beta}_{j1} \cdot \alpha_{j-1} \}
\Big), \\
&\underline{\xi}_j
= \frac{
\exp\{ \underline{\beta}_{j0} \}
}{
(b - a) \underline{\beta}_{j1}
}
\Big(
\exp\{ \underline{\beta}_{j1} \cdot \alpha_j \} -
\exp\{ \underline{\beta}_{j1} \cdot \alpha_{j-1} \}
\Big).
\end{align*}
The $j$th component of finite mixture $h$ becomes
\begin{align*}
g_j(x)
= \frac{
\overline{\beta}_{j1} \cdot \exp\{ \overline{\beta}_{j1} \cdot x \}
}{
\exp\{ \overline{\beta}_{j1} \cdot \alpha_j \} -
\exp\{\overline{\beta}_{j1} \cdot \alpha_{j-1} \}
} \ind(\alpha_{j-1} < x \leq \alpha_j),
\end{align*}
which is the density of
\begin{math}
T \sim \text{Exp}_{(\alpha_{j-1}, \alpha_j]}(\overline{\beta}_{j1}).
\end{math}
\end{example}

\subsection{Knot Selection}
\label{sec:knot-selection}

An important consideration in achieving a satisfactory acceptance rate is the method of selecting the knots $\alpha_1, \ldots, \alpha_{N-1}$ which partition domain $\Omega$ into regions $\mathscr{D}_j = (\alpha_{j-1}, \alpha_j]$ for $j = 1, \ldots, N$. It is desirable that the rejection rate $1 - \psi / \psi_N$ reduces rapidly as $N$ increases. If a very large $N$ is required, the effort to prepare the proposal and draw variates may not be worth the efficiency achieved in the final sampler. In this work, we consider a rule of thumb which directly seeks to reduce bound \eqref{eqn:bound}. The contribution of the $\ell$th region to \eqref{eqn:bound} can be characterized as
\begin{math}
\rho_\ell =
\{ \overline{\xi}_\ell - \underline{\xi}_\ell \} \big/ \sum_{j=1}^N \overline{\xi}_j,
\end{math}
so that $\rho_1, \ldots, \rho_{N}$ sum to \eqref{eqn:bound}. We iteratively refine the partition by selecting a region with probability proportional to $\rho_1, \ldots, \rho_{N}$ and bifurcate the selected region. This is described in Algorithm~\ref{alg:refine}. This method tends to select the largest contributors, but allows all regions with $\rho_\ell > 0$ to be selected. We allow the algorithm to complete with less than $N$ regions if the bound \eqref{eqn:bound} is smaller than a given tolerance $\epsilon > 0$.

We have opted to select knots entirely before sampling in this work; however, it is possible to refine the proposal during sampling by identifying the region $\mathscr{D}_{\ell}$ which contains a rejected draw $x$ and splitting that region at $x$. It is also possible to delete knots which yield regions with small contributions $\rho_\ell$ to \eqref{eqn:bound}. Several additional strategies are considered in Appendix~\ref{sec:vmf-knot-selection-study}: equally spaced knots, regions with equal probabilities, and Algorithm~\ref{alg:refine} with a deterministic ``greedy'' selection on line~\ref{line:selection}.

\begin{algorithm}
\caption{Probabilistic rule of thumb for sequential knot selection.}
\label{alg:refine}

\hspace*{\algorithmicindent}\textbf{Input}: maximum number of knots to add $N$. \\
\hspace*{\algorithmicindent}\textbf{Input}: initial vector of internal knots $\alpha_1, \ldots, \alpha_{N_0-1}$; may be empty with $N_0 = 0$. \\
\hspace*{\algorithmicindent}\textbf{Input}: tolerance $\epsilon > 0$.
\begin{algorithmic}[1]
\State $j \leftarrow 0$ 
\While{$j \leq N$}
\State Let $\mathscr{D}_\ell = (\alpha_{\ell-1}, \alpha_{\ell}]$ and compute $\rho_\ell$ for $\ell \in \{ 1, \ldots, N_0 + j \}$.
\State If $\sum_{\ell=1}^{N_0 + j} \rho_\ell < \epsilon$, break from the loop.
\State Draw $\ell \in \{ 1, \ldots N_0 + j \}$ from $\text{Discrete}(\rho_1, \ldots, \rho_{N_0 + j})$. \label{line:selection}
\State Let $\alpha^*$ be the midpoint of $\alpha_{\ell-1}$ and $\alpha_\ell$; add $\alpha^*$ to vector of knots.
\State Let $j \leftarrow j + 1$. \label{line:midpoint}
\EndWhile
\State \Return $(\alpha_0, \ldots, \alpha_{N_0 + j})$.
\end{algorithmic}
\end{algorithm}

\subsection{Restrictions on $w$}
\label{sec:restictions}

We have noted several restrictions on $w$; namely, it must be finite to use the constant or linear majorizer and must be strictly positive to use the linear minorizer (in the log-concave case) or linear majorizer (in the log-convex case). In this work, such problematic cases occur only at the endpoints of $\Omega$. We take a simple approach of truncating the support to a bounded interval $(\alpha_0, \alpha_N]$ within the original $\Omega$ that excludes such endpoints. A more elaborate approach which avoids truncation is discussed in Remark~\ref{remark:infinite-log-weight}.

\section{Illustrations with von Mises Fisher Distribution}
\label{sec:vmf}

The von Mises Fisher (VMF) distribution provides several opportunities to illustrate the vertical weighted strips approach. VMF arises in the study of directional data which are observed on the $d$-dimensional sphere $\mathbb{S}^d = \{ \vec{v} \in \mathbb{R}^d : \vec{v}^\top \vec{v} = 1 \}$. \citet{FisherLewisEmbleton1993} and \citet{MardiaJupp1999} give comprehensive treatments in this area and \citet{PewseyGarciaPortugues2021} provide a survey of more recent developments. A random variable $\vec{V}$ with distribution $\text{VMF}_d(\vec{\mu}, \kappa)$ has density
\begin{align*}
f_{\text{VMF}}(\vec{v}) = \frac{
\kappa^{d / 2 - 1}
}{
(2 \pi)^{-d/2} I_{d/2 - 1}(\kappa)
} \exp(\kappa \cdot \vec{\mu}^\top \vec{v}) \cdot \ind(\vec{v} \in \mathbb{S}^d),
\end{align*}
with modified Bessel function of the first kind
\begin{math}
I_{\nu}(x) = \sum_{m=0}^\infty \{m! \cdot \Gamma(m + \nu + 1)\}^{-1} (\frac{x}{2})^{2m + \nu}.
\end{math}
Parameter $\vec{\mu} \in \mathbb{S}^d$ determines the orientation on the sphere and parameter $\kappa > 0$ determines the concentration. This section will consider the VWS approach in several von Mises Fisher scenarios. Section~\ref{sec:vmf-generation} demonstrates variate generation from VMF. Section~\ref{sec:vmf-probabilities} uses a VWS proposal to approximately compute probabilities---without Monte Carlo---via Proposition~\ref{result:total-variation}. Section~\ref{sec:vmf-bayesian} presents a Bayesian application with independent and identically distributed data observed from a VMF distribution; here, VWS can be utilized to take exact draws from the posterior distribution of the unknown parameters without requiring MCMC. Additional material given in Appendix~\ref{sec:vmf-knot-selection-study} includes a comparison of several knot selection methods.

The terms ``constant VWS'' and ``linear VWS'' will refer to the constant and linear constructions in Example~\ref{example:vws-constant} and Section~\ref{sec:vws-linear}, respectively. The term ``VS'' will refer to the specific factorization in Example~\ref{example:vertical-strips}. Section~\ref{sec:vmf-generation} considers a factorization for VWS which is different than VS, but the two are seen to perform similarly; this is specific to the application and will not necessarily occur in general. We find it instructive to demonstrate both factorizations and see that they yield valid samplers.

The target in Section~\ref{sec:vmf-bayesian} is a case where a non-VS factorization is more immediately practical; here, the target contains a Bessel function which can be incorporated into the weight function and majorized. A remaining exponential term is used as a base density which supports sampling on $(0, \infty)$.

\subsection{Generation of Variates}
\label{sec:vmf-generation}

A widely used method to generate variates from $\text{VMF}_d(\vec{\mu}, \kappa)$ is a rejection sampling scheme developed by \citet{Ulrich1984} and \citet{Wood1994}. For example, this method is used in the R packages movMF \citep{HornikGrun2014} and Rfast \citep{TsagrisPapadakis2018}. The sampler is based on the following construction. Without loss of generality, suppose $\vec{\mu}_0 = (1, 0, \ldots, 0)$. A random variable $\vec{V}_0 \sim \text{VMF}_d(\vec{\mu}_0, \kappa)$ can be obtained using
\begin{math}
\vec{V}_0 = \left( X, \sqrt{1 - X^2} \cdot \vec{U} \right),
\end{math}
where $\vec{U}$ is a uniform random variable on the sphere $\mathbb{S}^{d-1}$ and $X$ has density
\begin{align}
f(x) =
\frac{
(\kappa / 2)^{d/2 - 1} (1 - x^2)^{(d-3)/2} \exp(\kappa x)
}{
\sqrt{\pi} \cdot I_{d/2 - 1}(\kappa) \cdot \Gamma((d-1)/2)
} \cdot \ind(-1 < x < 1).
\label{eqn:vmf-target}
\end{align}
A draw of $\vec{U}$ can be readily obtained from $\vec{Z} / \sqrt{\vec{Z}^\top \vec{Z}}$ with $\vec{Z} \sim \text{N}(\vec{0}, \vec{I}_{d-1})$ \citep{Muller1959}. Furthermore, $\vec{V} \sim \text{VMF}_d(\vec{\mu}, \kappa)$ for an arbitrary $\vec{\mu}$ can be obtained from $\vec{V}_0$ using $\vec{V} = \vec{Q} \vec{V}_0$ with an orthonormal matrix $\vec{Q}$ whose first column is $\vec{\mu}$. Therefore, the problem of drawing $\vec{V}_0$ reduces to univariate generation of $X$.

\citet{Ulrich1984} and \citet{Wood1994} developed a proposal for $X$ based on $Z \sim \text{Beta}((d-1)/2, (d-1)/2)$ via the random variable $X_0 = [1 - (1+b) Z] / [1 - (1-b) Z]$ with density

\begin{align}
h_{X_0}(x \mid b) = \frac{
2 \cdot b^{(d-1)/2} (1 - x^2)^{(d-3)/2}
}{
B((d-1)/2, (d-1)/2) \cdot [(1+b) - (1-b)x]^{d-1}
}, \quad x \in (-1,1),
%
%
\label{eqn:wood-proposal}
\end{align}
where $b \in (0,1)$ is a fixed number. The smallest $M$ such that $f(x) / \{ M h_{X_0}(x \mid b) \} \leq 1$ for all $x \in (-1, 1)$ is obtained from
\begin{align*}
x_* = \frac{1 - b_*}{1 + b_*}, \quad
b_* = \frac{ -2\kappa + \sqrt{4 \kappa^2 + (d-1)^2} }{ d-1 }.
\end{align*}
Let $c = \kappa x_* + (d-1) \log(1 - x_*^2)$. The rejection sampler proceeds by generating $x$ from proposal \eqref{eqn:wood-proposal} and $u$ from $\text{Uniform}(0,1)$. We accept $x$ as a draw from the target if $\log u < \kappa x + (d-1)\log(1 - x \cdot x_*) - c$; otherwise, we reject $x$ and $u$ and draw again. We will refer to this as the UW sampler.

We also note an alternative approach from \citet{KurzHanebeck2015} to sample from target \eqref{eqn:vmf-target}. Here, an inverse CDF method is obtained by deriving expressions for the CDF which are free of integrals and using a bisection algorithm to numerically compute the quantile function.

As a first demonstration of the VWS approach, the following example considers a simple proposal that makes use of inequality $1 - x^2 \leq e^{-x^2}$ without partitioning the support.

\begin{example}[Simple Proposal]
\label{example:simple-proposal}

Consider the decomposition $f_0(x) = w(x) g(x)$ where $w(x) = (1-x^2)^{(d-3)/2}$ and
\begin{math}
g(x) = \kappa e^{\kappa x} \cdot \ind (-1 < x < 1) / (e^\kappa - e^{-\kappa})
\end{math}
is the density of $T \sim \text{Exp}_{(-1,1)}(\kappa)$ from Example~\ref{example:doubly-truncated-exp}. The normalizing constant of $f$ relative to $f_0$ is
\begin{align}
\psi =
\frac{
2^{d/2 - 1} \sqrt{\pi} \cdot I_{d/2 - 1}(\kappa) \cdot \Gamma((d-1)/2)
}{
\kappa^{d/2 - 2} (e^{\kappa} - e^{-\kappa})
}.
\label{eqn:simple-norm-const}
\end{align}
For any $d > 3$, $w(x) = (1-x^2)^{(d-3)/2}$ is majorized by $\overline{w}(x) = e^{-x^2 (d-3) / 2}$. The function $h_0(x) = \overline{w}(x) g(x)$ is recognized as an unnormalized density of a normal random variable with mean $\kappa (d-3)^{-1}$ and variance $(d-3)^{-1}$ which has been truncated to the interval $(-1,1)$. After completing the square and adjusting for the truncated support, the normalized proposal and normalizing constant are, respectively,
\begin{align*}
&h(x) = \sqrt{\frac{d-3}{2\pi}} \exp\left\{-\frac{d-3}{2} \left[x - \kappa (d-3)^{-1} \right]^2 \right\} \cdot
\frac{\ind(-1 < x < 1)}{
\Prob(-1 < T < 1)
}, \\
&\psi_* = \frac{\kappa}{e^{\kappa} - e^{-\kappa}} \sqrt{\frac{2\pi}{d-3}}
\exp\left\{ \frac{1}{2} \kappa^2 (d-3)^{-1} \right\} \cdot
\Prob(-1 < T < 1).
\end{align*}
Drawing from proposal $h$ is straightforward using the inverse CDF method as discussed in Appendix~\ref{sec:sampling-univariate-proposal}. We may therefore proceed with rejection sampling with $h$ as usual. Table~\ref{tab:simple-reweighted} displays the rejection rate $1 - \psi / \psi_*$ for several settings of $d$ and $\kappa$. Acceptance is relatively frequent for smaller $\kappa$ but the sampler becomes increasingly inefficient as $\kappa$ increases beyond 1. It is interesting to note that the rejection rate does not increase monotonically with $d$.

\begin{table}
\centering
\caption{Rejection rates as percentage $100 \times (1 - \psi / \psi_*)$ for the simple VWS sampler.}
\label{tab:simple-reweighted}
\begin{tabular}{rrrrrrrrrr}
\toprule
\multicolumn{1}{c}{} &
\multicolumn{9}{c}{$\kappa$} \\
\cmidrule(lr){2-10}
\multicolumn{1}{c}{$d$} &
\multicolumn{1}{c}{0.1} &
\multicolumn{1}{c}{0.2} &
\multicolumn{1}{c}{0.5} &
\multicolumn{1}{c}{1} &
\multicolumn{1}{c}{2} &
\multicolumn{1}{c}{5} &
\multicolumn{1}{c}{10} &
\multicolumn{1}{c}{20} &
\multicolumn{1}{c}{50} \\
\cmidrule(lr){1-1}
\cmidrule(lr){2-10}
4  &  8.23 &  8.28 &  8.67 &  9.98 & 14.24 & 28.22 & 42.79 & 56.82 & 71.57 \\
5  & 10.76 & 10.83 & 11.32 & 13.01 & 18.73 & 38.95 & 59.70 & 76.62 & 89.76 \\
10 &  8.60 &  8.65 &  8.97 & 10.11 & 14.50 & 38.44 & 73.71 & 94.50 & 99.64 \\
20 &  4.16 &  4.17 &  4.26 &  4.58 &  5.86 & 15.43 & 48.50 & 93.45 & 99.98 \\
50 &  1.56 &  1.56 &  1.58 &  1.62 &  1.82 &  3.23 &  9.33 & 41.17 & 99.86 \\
\bottomrule
\end{tabular}
\end{table}

\end{example}

The following examples use the constant and linear constructions from Sections~\ref{sec:vws-constant} and \ref{sec:vws-linear} which can yield proposals with small rejection rates for a wide range of $\kappa$ and $d$, including cases $d \in \{ 2, 3 \}$ precluded from Example~\ref{example:simple-proposal}. We consider two possible factorizations using a constant majorizer.

\begin{example}[Constant VWS]
\label{example:vmf-constant-vws}
Let $\alpha_0 = -1$ and $\alpha_N = 1$ and decompose the target as
\begin{math}
f_0(x) = w(x) g(x)
\end{math}
with $w(x) = (1 - x^2)^{(d-3)/2}$ and $g$ the density of $\text{Exp}_{(\alpha_0, \alpha_N)}(\kappa)$ from Example~\ref{example:doubly-truncated-exp}. The derivative
\begin{math}
\frac{d}{dx} \log w(x) = -(d-3) \frac{x}{1 - x^2}
\end{math}
is positive for $x \in (-1, 0)$, negative for $x \in (0, 1)$, and has root $x = 0$. Therefore, $\log w(x)$ is unimodal on $(-1, 1)$ with a maximum at $x = 0$. A majorizer and minorizer, respectively, of $w$ on region $\mathscr{D}_j = (\alpha_{j-1}, \alpha_j]$ is given by
\begin{align}
\overline{w}_j = 
\begin{cases}
w(0), & \text{if $\alpha_{j-1} < 0 \leq \alpha_j$}, \\
w(\alpha_{j-1}), & \text{if $\alpha_{j-1} \geq 0$}, \\
w(\alpha_j), & \text{if $\alpha_j < 0$},
\end{cases}
\label{eqn:vmf-vws-constant-majorizer}
\end{align}
and
\begin{align}
\underline{w}_j = 
\begin{cases}
\min\{ w(\alpha_{j-1}), w(\alpha_j) \}, & \text{if $\alpha_{j-1} < 0 \leq \alpha_j$}, \\
w(\alpha_j), & \text{if $\alpha_{j-1} \geq 0$}, \\
w(\alpha_{j-1}), & \text{if $\alpha_j < 0$},
\end{cases}
\label{eqn:vmf-vws-constant-minorizer}
\end{align}
The constants used in bound \eqref{eqn:bound} are
\begin{align}
\overline{\xi}_j = \overline{w}_j \Prob(\alpha_{j-1} < T \leq \alpha_{j})
\quad \text{and} \quad
\underline{\xi}_j = \underline{w}_j \Prob(\alpha_{j-1} < T \leq \alpha_{j}),
\label{eqn:vmf-vws-constant-xi}
\end{align}
where $T \sim \text{Exp}_{(\alpha_{j-1}, \alpha_j)}(\kappa)$.
\end{example}

\begin{example}[Constant VS]
\label{example:vmf-constant-vs}
Let us modify Example~\ref{example:vmf-constant-vws} and decompose the target as
\begin{math}
f_0(x) = w(x) g(x)
\end{math}
with $w(x) = (1 - x^2)^{(d-3)/2} e^{\kappa x}$ and $g$ the density of $\text{Uniform}(\alpha_0, \alpha_N)$. The derivative
\begin{math}
\frac{d}{dx} \log w(x) = -(d-3) \frac{x}{1 - x^2} + \kappa,
\end{math}
differs from Example~\ref{example:vmf-constant-vws} only by a constant, so that forms \eqref{eqn:vmf-vws-constant-majorizer} for the majorizer and \eqref{eqn:vmf-vws-constant-minorizer} for the minorizer also apply here. Constants \eqref{eqn:vmf-vws-constant-xi} for bound \eqref{eqn:bound} are obtained with $T \sim \text{Uniform}(\alpha_{j-1}, \alpha_j)$.
\end{example}

\begin{example}[Linear VWS]
\label{example:vmf-linear-vws}
Let $\alpha_0 = -1 + \epsilon$ and $\alpha_N = 1 - \epsilon$ for a small $\epsilon > 0$, taken to be $10^{-4}$ in the present section and $10^{-6}$ in Section~\ref{sec:vmf-probabilities}. Again consider the decomposition of $f$ in Example~\ref{example:vmf-constant-vws}. For a linear majorizer, note that 
\begin{math}
\frac{d^2}{dx^2} \log w(x) = -(d-3) \frac{1 + x^2}{(1 - x^2)^2};
\end{math}
therefore, $w$ is log-convex if $d < 3$, log-concave if $d > 3$, and a constant if $d = 3$. Accordingly, the majorizer for $w$ on $\mathscr{D}_j$ may be taken as
\begin{math}
\overline{w}_j = \exp\{ \overline{\beta}_{0j} + \overline{\beta}_{1j} x \}
\end{math}
with constants $\overline{\beta}_{0j}$ and $\overline{\beta}_{1j}$ selected according to Section~\ref{sec:vws-linear}. From Example~\ref{example:doubly-truncated-exp}, density $g_j$ corresponds to distribution $\text{Exp}_{(\alpha_{j-1}, \alpha_j)}(\kappa + \overline{\beta}_{j1})$. The expression for $\overline{\xi}_j$ is given in Example~\ref{example:doubly-truncated-exp}. Rather than the corresponding linear minorizer, we use the trivial minorizer discussed in Remark~\ref{remark:trivial-minorizer} here to operate on the exact rejection rate, via
\begin{displaymath}
\underline{\xi}_j = \int_{\alpha_{j-1}}^{\alpha_j} w(x) g(x) dx
= \int_{\alpha_{j-1}}^{\alpha_j} 
\frac{(1 - x^2)^{(d-3)/2} \kappa e^{\kappa x}}{e^{\kappa \alpha_N} - e^{\kappa \alpha_0}} dx
\end{displaymath}
which may be computed numerically.
\end{example}

To generate draws from $f$, let us consider the UW rejection sampler and several variations of the VWS sampler. A small study has been carried out to compare overall rejection rates for the UW, constant VS, constant VWS, and linear VWS methods. We consider $d \in \{ 2, 4, 5 \}$ and $\kappa \in \{ 0.1, 10 \}$. Note that we have skipped $d = 3$ because it is an easier case with the quadratic term vanishing from $f$. In each setting, the UW rejection rate is computed empirically using 50,000 draws. The VS/VWS samplers are based on $N$ regions, where $N$ is refined from a single region to 100 regions using Algorithm~\ref{alg:refine}. The process of refining from one to 100 regions is repeated 100 times to capture randomness used in the selection. Figure~\ref{fig:vmf-adapt} displays the rejection rate on the log-scale as the median value of $\log(1 - \psi / \psi_N)$, along with a confidence band highlighting the 2.5\% and 97.5\% quantiles taken over the 100 repetitions. The UW sampler is seen to be quite efficient when $\kappa = 0.1$ but rejects more frequently when $\kappa = 10$. In the case that $\kappa = 10$ and $d = 5$, the rejection rate is $23.9\%$. The constant VS and constant VWS samplers perform comparably, with constant VWS slightly more efficient for $\kappa = 10$. For all nine settings of $d$ and $\kappa$, constant VS/VWS achieve a rejection rate of $\exp(-2.47) = 8.5\%$ or smaller with 100 regions; this is competitive with the efficiency of UW when $\kappa \geq 1$. Linear VWS outperforms constant VS/VWS in all settings and achieves rejection rates several orders of magnitude smaller. To improve upon the rejection rate of UW, linear VWS requires nearly $N = 100$ when $\kappa = 0.1$ but only a small $N$ when $\kappa = 10$. In this application, the additional effort to derive and implement a linear majorizer yields a proposal which can achieve a very low rejection rate over the family of target distributions.

Constant VWS and constant VS were seen to perform comparably in this example. Furthermore, Appendix~\ref{sec:vmf-generation-extra} shows that the corresponding factorizations yield equivalent proposals using the linear majorizer. Therefore, only the linear construction in Example~\ref{example:vmf-linear-vws} was considered in the study. Under these two factorizations, the two weight functions $\log w(x) = \frac{1}{2} (d-3) \log (1 - x^2)$ and $\log w(x) = \frac{1}{2} (d-3) \log (1 - x^2) + \kappa x$ are equally amenable to majorization \& minorization, as they differ only by a linear term. Moreover, truncating, reweighting, and drawing from the corresponding base distributions, $\text{Exp}_{(\alpha_0, \alpha_N]}(\kappa)$ and $\text{Uniform}(\alpha_0, \alpha_N)$, can be carried out in either case without too much difficulty. The choice of factorization becomes more consequential when it makes such operations substantially more or less practical.

\begin{figure}
\centering
\begin{subfigure}{0.32\textwidth}
\includegraphics[width=\textwidth]{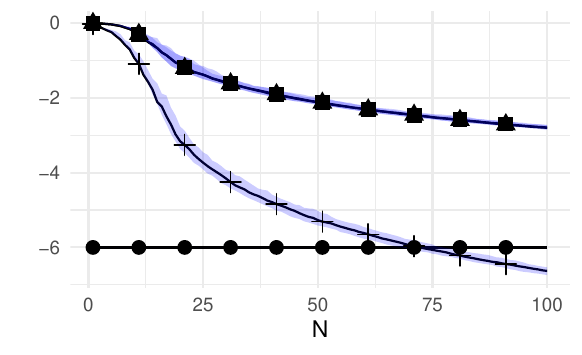}
\caption{$d = 2$, $\kappa = 0.1$.}
\label{fig:vmf-adapt-d1-kappa1}
\end{subfigure}
\begin{subfigure}{0.32\textwidth}
\includegraphics[width=\textwidth]{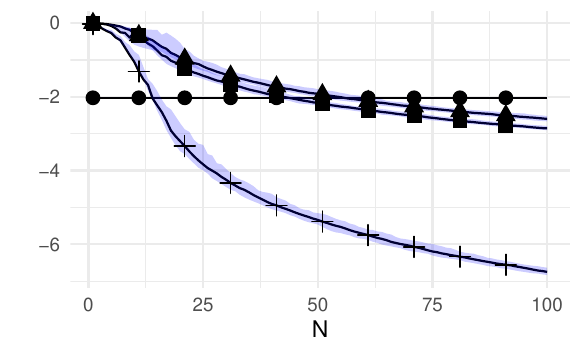}
\caption{$d = 2$, $\kappa = 1$.}
\label{fig:vmf-adapt-d1-kappa2}
\end{subfigure}
\begin{subfigure}{0.32\textwidth}
\includegraphics[width=\textwidth]{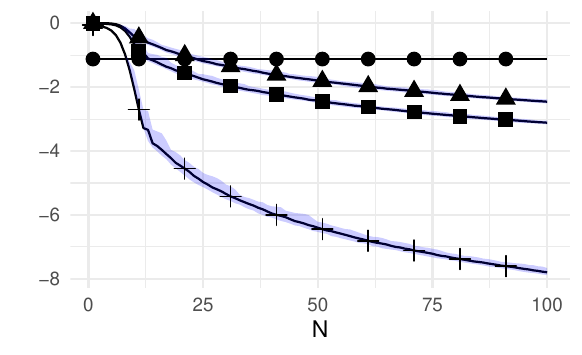}
\caption{$d = 2$, $\kappa = 10$.}
\label{fig:vmf-adapt-d1-kappa3}
\end{subfigure}

\begin{subfigure}{0.32\textwidth}
\includegraphics[width=\textwidth]{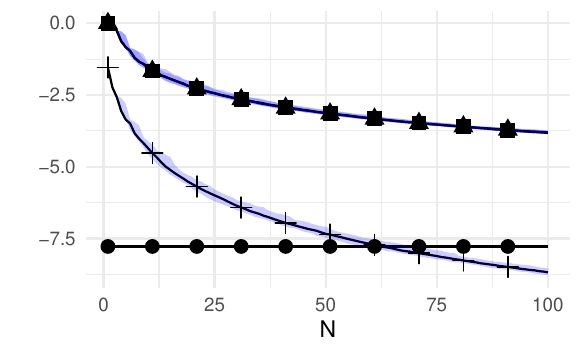}
\caption{$d = 4$, $\kappa = 0.1$.}
\label{fig:vmf-adapt-d2-kappa1}
\end{subfigure}
\begin{subfigure}{0.32\textwidth}
\includegraphics[width=\textwidth]{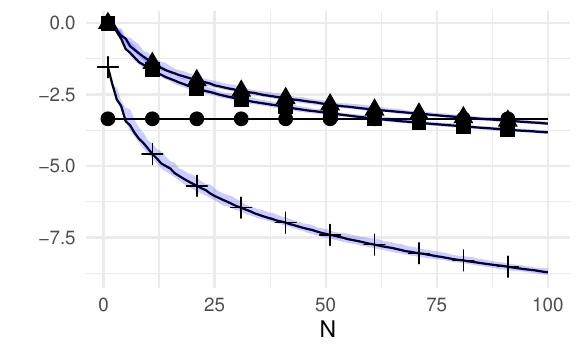}
\caption{$d = 4$, $\kappa = 1$.}
\label{fig:vmf-adapt-d2-kappa2}
\end{subfigure}
\begin{subfigure}{0.32\textwidth}
\includegraphics[width=\textwidth]{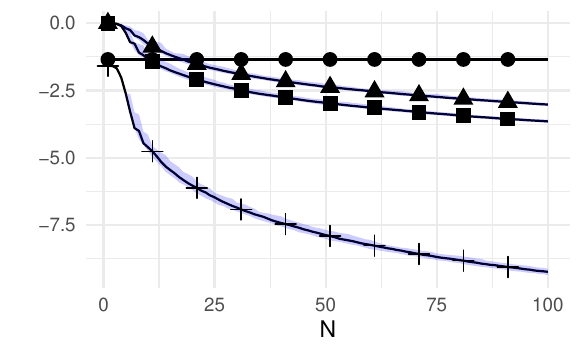}
\caption{$d = 4$, $\kappa = 10$.}
\label{fig:vmf-adapt-d2-kappa3}
\end{subfigure}

\begin{subfigure}{0.32\textwidth}
\includegraphics[width=\textwidth]{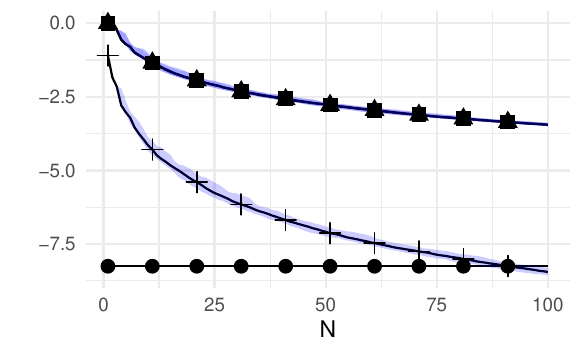}
\caption{$d = 5$, $\kappa = 0.1$.}
\label{fig:vmf-adapt-d3-kappa1}
\end{subfigure}
\begin{subfigure}{0.32\textwidth}
\includegraphics[width=\textwidth]{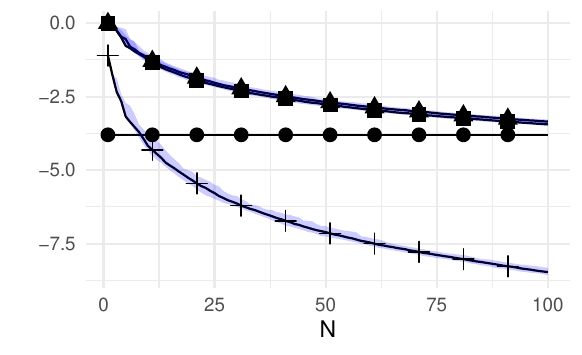}
\caption{$d = 5$, $\kappa = 1$.}
\label{fig:vmf-adapt-d3-kappa2}
\end{subfigure}
\begin{subfigure}{0.32\textwidth}
\includegraphics[width=\textwidth]{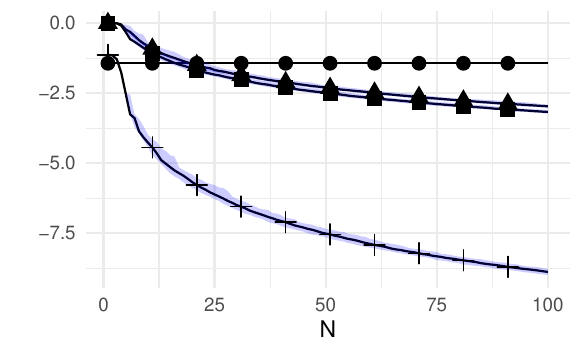}
\caption{$d = 5$, $\kappa = 10$.}
\label{fig:vmf-adapt-d3-kappa3}
\end{subfigure}
\caption{Log of rejection probability $\log(1 - \psi / \psi_N)$ using several samplers: UW ($\CIRCLE$), constant VS ($\blacktriangle$), constant VWS ($\blacksquare$), and linear VWS ($+$).}
\label{fig:vmf-adapt}
\end{figure}

\subsection{Approximate Computation of Probabilities}
\label{sec:vmf-probabilities}

Proposition~\ref{result:total-variation} established that proposal $h$ may be useful in approximating probabilities under $f$ when bound \eqref{eqn:bound} can be made small. To illustrate, let us consider the probability that $\vec{V}_0 \sim \text{VMF}_d(\vec{\mu}_0, \kappa)$ lies in the nonnegative orthant $A = \{ \vec{v} \in \mathbb{R}^d : \vec{v} \geq 0 \}$.
Using the transformation from $(X, \vec{U})$ to $\vec{V}_0$ described in Section~\ref{sec:vmf-generation},
\begin{align*}
\Prob(\vec{V}_0 \in A) &= 
\Prob\left( X \geq 0, U_1 \sqrt{1-X^2} \geq 0, \ldots, U_{d-1} \sqrt{1-X^2} \geq 0 \right) \\
&= \Prob\left( X \geq 0, U_1 \geq 0, \ldots, U_{d-1} \geq 0 \right) \\
&= \Prob(U_1 \geq 0, \ldots, U_{d-1} \geq 0) \Prob(X \geq 0) \\
&= 2^{-(d-1)} \Prob(X \geq 0).
\end{align*}
Similarly, let $\tilde{\vec{V}}_0 = (\tilde{X}, [1 - \tilde{X}^2]^{1/2} \cdot \vec{U}))$ with $\tilde{X} \sim h$ so that $\Prob(\tilde{\vec{V}}_0 \in A) = 2^{-(d-1)} \Prob(\tilde{X} \geq 0)$. The bound \eqref{eqn:bound} gives
\begin{align}
\Delta &:= |\Prob(\tilde{\vec{V}}_0 \in A) - \Prob(\vec{V}_0 \in A)| \nonumber \\
&= 2^{-(d-1)} |\Prob(\tilde{X} \geq 0) - \Prob(X \geq 0)| \nonumber \\
&\leq 2^{-(d-1)} (1 - \psi / \psi_N).
\label{eqn:vmf-orthant-bound}
\end{align}
A brief study in Appendix~\ref{sec:vmf-probabilities-study} compares $\Delta$ with the actual approximation error---for several values of $\kappa$---and $d$ and demonstrates their reduction as $N$ increases. The bound is seen to be conservative as may have been anticipated because it is not specific to the event $[X \geq 0]$.

\subsection{A Bayesian Application}
\label{sec:vmf-bayesian}

A third setting which can make use of vertical weighted strips is in a Bayesian analysis with independent and identically distributed VMF outcomes. \citet{DamienWalker1999} propose a full Bayesian treatment for circular data ($d = 2$) based on Gibbs sampling with data augmentation. \citet{NunezAntonioGutierrezPena2005} propose a Bayesian treatment for $d \geq 2$ based on sampling-importance-resampling. We will develop a rejection sampler based on vertical weighted strips for the VMF setting; although the posterior does not follow a familiar distribution, samples can be generated from it exactly without resorting to MCMC.

Suppose $\vec{v}_1, \ldots, \vec{v}_n$ are an independent and identically distributed sample from $\text{VMF}_d(\mu, \kappa)$ with unknown $\kappa > 0$ and $\vec{\mu} \in \mathbb{S}^d$. A conjugate prior in this setting is given by
\begin{align*}
\pi^{(0)}(\vec{\mu}, \kappa) &\propto
\left[ \frac{\kappa^{d/2 - 1}}{I_{d/2 - 1}(\kappa)} \right]^{c_0} \exp(\kappa R_0 \vec{m}_0^\top \vec{\mu}),
\end{align*}
where $c_0 \geq 0$, $R_0 \geq 0$, and $\vec{m}_0 \in \mathbb{S}^d$ \citep{MardiaElAtoum1976}. Upon observing $\vec{v}_1, \ldots, \vec{v}_n$, the posterior distribution for $[\vec{\mu}, \kappa \mid \vec{v}_1, \ldots, \vec{v}_n]$ is
\begin{align*}
\pi^{(n)}(\vec{\mu}, \kappa) &\propto
\pi^{(0)}(\vec{\mu}, \kappa) \left[ \frac{\kappa^{d/2 - 1}}{I_{d/2 - 1}(\kappa)} \right]^n
\exp\left\{
\kappa \vec{\mu}^\top \sum_{i=1}^n \vec{v}_i
\right\} \\
&= \left[ \frac{\kappa^{d/2 - 1}}{I_{d/2 - 1}(\kappa)} \right]^{c_0 + n}
\exp\left\{
\kappa R_n \vec{\mu}^\top \vec{m}_n
\right\},
\end{align*}
where $\vec{m}_n = R_n^{-1} (\sum_{i=1}^n \vec{v}_i + R_0 m_0)$ and $R_n$ is the Euclidean norm of $\sum_{i=1}^n \vec{v}_i + R_0 m_0$. Therefore, $\pi^{(0)}(\vec{\mu}, \kappa)$ is seen to be a conjugate prior. Notice that 
\begin{align*}
\pi^{(n)}(\vec{\mu}, \kappa) &\propto
\left[ \frac{\kappa^{d/2 - 1}}{I_{d/2 - 1}(\kappa)} \right]^{c_0 + n - 1}
\frac{I_{d/2 - 1}(\kappa R_n)}{I_{d/2 - 1}(\kappa)}
f_\text{VMF}(\vec{\mu} \mid \vec{m}_n, \kappa R_n)
\end{align*}
is the product of conditional distribution $[\vec{\mu} \mid \kappa, \vec{v}_1, \ldots \vec{v}_n] \sim \text{VMF}(\vec{\mu} \mid \vec{m}_n, \kappa R_n)$ and marginal $[\kappa \mid \vec{v}_1, \ldots \vec{v}_n]$ with unnormalized density
\begin{displaymath}
f_0(\kappa) = 
\left[ \kappa^{d/2 - 1} / I_{d/2 - 1}(\kappa) \right]^{c_0 + n - 1}
I_{d/2 - 1}(\kappa R_n) / I_{d/2 - 1}(\kappa).
\end{displaymath}
Therefore, exact generation of variates from the posterior may be accomplished by first drawing $\kappa$ from $f_0$ then $\vec{\mu}$ from $\text{VMF}(\vec{\mu} \mid \vec{m}_n, \kappa R_n)$. The latter has been explored in Section~\ref{sec:vmf-generation}, so we now focus on the target $f_0$. Consider the decomposition $f_0(\kappa) = w(\kappa) g(\kappa)$ with $g(\kappa) = \tau e^{ -\tau \kappa } \cdot \ind(\kappa > 0)$ the density of the exponential distribution with rate $\tau > 0$ of our choosing and weight function
\begin{align*}
w(\kappa) &= \tau^{-1} \cdot e^{ \tau \kappa }
\left[ \frac{\kappa^{d/2 - 1} }{ I_{d/2 - 1}(\kappa)} \right]^{c+n-1}
\frac{I_{d/2 - 1}(\kappa R_n)}{I_{d/2 - 1}(\kappa)} \\
&= \tau^{-1} \cdot e^{-\kappa (1 - R_n - \tau)}
\left[ \frac{\kappa^{d/2 - 1} }{ e^\kappa \mathcal{I}_{d/2 - 1}(\kappa)} \right]^{c+n-1}
\frac{\mathcal{I}_{d/2 - 1}(\kappa R_n)}{\mathcal{I}_{d/2 - 1}(\kappa)}.
\end{align*}
The exponentially scaled Bessel function $\mathcal{I}_{\nu}(x) = e^{-x} I_\nu(x)$, computed with \texttt{besselI} in R, is useful for working on the log-scale to avoid precision issues due to very large or very small magnitude numbers. To carry out VWS sampling from $f_0$, we opt for the constant majorizer described in Example~\ref{example:vws-constant}, using numerical optimization to find the minimum and maximum of $\log w(\kappa)$ on each region $\mathscr{D}_j$, and take $\tau = 0.01$ so that the proposal is not closely concentrated around zero. We will make use of bound \eqref{eqn:bound} and avoid computing the normalizing constant $\psi$ of $f_0$. 

Let us consider a dataset from Appendix~B2 of \citet{FisherLewisEmbleton1993} with $n = 26$ measurements of magnetic remanence in specimens of Palaeozoic red-beds from Argentina. Measurements are initially given as declination / inclination coordinates $(\theta_{i1}, \theta_{i2})$ in degrees and transformed to $\mathbb{R}^3$ using
\begin{align*}
v_{i1} = \sin(\vartheta_{i2}) \cos(\vartheta_{i1}), \quad
v_{i2} = \sin(\vartheta_{i2}) \sin(\vartheta_{i1}), \quad
v_{i3} = \cos(\vartheta_{i2}),
\end{align*}
where 
$\vartheta_{i1} = (360^\circ - \theta_{i1}) \pi / 180^\circ$ and $\vartheta_{i2} = (90^\circ + \theta_{i1}) \pi / 180^\circ$. A benefit of a rejection sampling approach such as VWS is that accepted draws will be an exact sample from the target posterior distribution. We take hyperparameters $c_0 = 0$ and $R_0 = 0$ to match Example~5.4 of \citet{NunezAntonioGutierrezPena2005}.

Figure~\ref{fig:vmf-bayesian} displays the results of rejection sampling. The bound \eqref{eqn:bound} shown in Figure~\ref{fig:vmf-bayesian-adapt} reduces to 11.4\% with $N = 50$ regions. Using this proposal, 6,363 rejections were encountered to obtain a sample of 100,000 (5.98\% rejection). The empirical density of the posterior based on the accepted draws is displayed in Figure~\ref{fig:vmf-bayesian-kappa-posterior}. An estimate of $\kappa$ based on the posterior mean is $\tilde{\kappa} = 113.24$ and an associated 95\% credible interval based on $2.5\%$ and $97.5\%$ quantiles is $(74.00, 160.72)$. For comparison, consider the MLE computed by numerical maximization of the log-likelihood; here, we transform from three Euclidean pre-parameters $\vec{\zeta} = (\zeta_1, \zeta_2, \zeta_3)$ to $(\kappa, \vec{\mu})$ to enforce $\kappa > 0$ and $\vec{\mu} \in \mathbb{S}^d$ with unconstrained numerical optimization. We obtain the estimate $\hat{\kappa} = 113.24$ and an associated 95\% confidence interval $(77.07, 166.24)$.

\begin{figure}
\centering
\begin{subfigure}{0.48\textwidth}
\includegraphics[width=\textwidth]{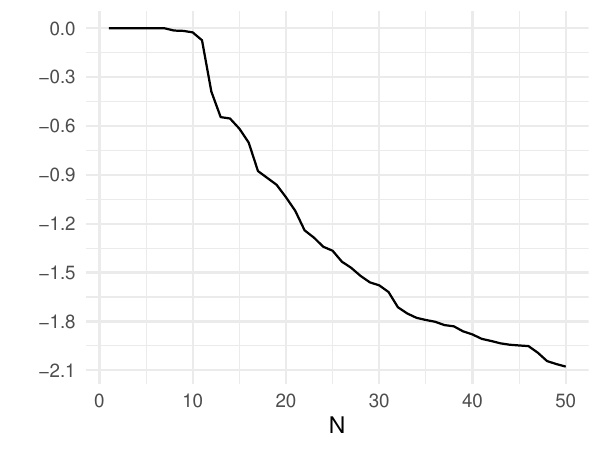}
\caption{}
\label{fig:vmf-bayesian-adapt}
\end{subfigure}
\begin{subfigure}{0.48\textwidth}
\includegraphics[width=\textwidth]{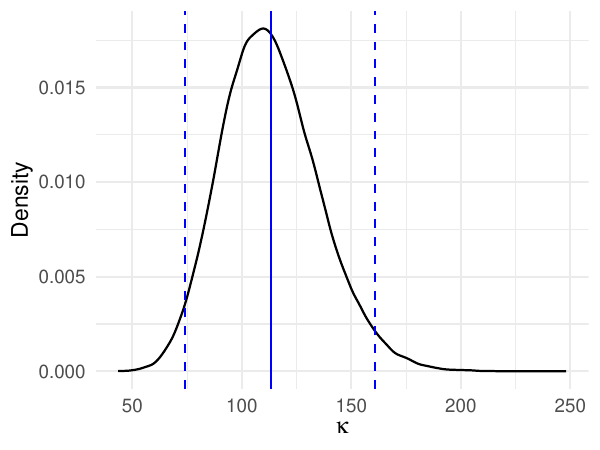}
\caption{}
\label{fig:vmf-bayesian-kappa-posterior}
\end{subfigure}
\caption{
Results for posterior $[\kappa \mid \vec{v}_1, \ldots \vec{v}_n]$. (\subref{fig:vmf-bayesian-adapt}) Log of bound \eqref{eqn:bound} for rejection probability as $N$ increases from 1 to 50. (\subref{fig:vmf-bayesian-kappa-posterior}) Empirical density of accepted draws (solid curve) with mean (solid vertical line) and 2.5\% and 97.5\% quantiles (dashed vertical lines).}
\label{fig:vmf-bayesian}
\end{figure}

\section{Conclusions}
\label{sec:conclusions}

This paper has explored vertical weighted strips (VWS), a generalization of the vertical strips method to construct proposals for rejection sampling. Regarding the target as a weighted density provides additional flexibility in constructing the proposal. This approach is effective when the portion designated as the weight function is majorized by a more convenient function, and this majorizer recombined with the remaining base density yields a distribution for which draws are conveniently generated. Several examples were given to demonstrate situations where practical samplers can be achieved. Highly efficient samplers were obtained in some cases using only a moderate number of regions. The framework provides insight into the rejection rate which may help to guide proposal construction.

We focused on two particular majorizers: one based on a constant and one based on a linear function on the logarithmic scale. A source of inspiration for other useful inequalities may be in the minorization-maximization (MM) literature \citep[e.g.][]{Lange2016}, where minorization is used to construct a sequence of surrogates to a complicated likelihood function which are more readily maximized to obtain an MLE.

The univariate setting of this paper most readily applies to multivariate sampling within the context of a Gibbs sampler. Here, VWS may be used to generate exact draws from unfamiliar univariate conditionals. There is a tradeoff between proposal construction time and sampling time: typically only one accepted draw is needed in each iteration of the Gibbs sampler so that a moderate rejection rate may be preferable to spending more time to craft the proposal each iteration.

The finite mixture of VMF densities is a relevant extension to the setting considered in Section~\ref{sec:vmf}. This model is useful both in clustering and flexible modeling of directional data. Bayesian fitting of such models is considered by \citet{TaghiaMaLeijon2014} and \citet{GopalYang2014}; both works consider a variational approximation to the posterior and the latter also proposes a Gibbs sampling method. Here, a collapsed Gibbs sampler is used to sequentially draw latent subject-specific class labels and class-specific concentration parameters; class-specific orientation parameters are assumed in the model but marginalized out so that they do not need to be drawn. While it is routine to draw class labels from a categorical distribution, the conditional of each concentration parameter has a less familiar weighted lognormal distribution. VWS may be used in place of the Metropolis step suggested in the work. Taking exact draws from the conditional should result in mixing better than (or as good as) a Metropolis step and avoids the issue of tuning to find a suitable step size. The amount of computation to obtain a proposal for each concentration parameter may be substantially larger than a decision rule in a typical Metropolis algorithm, but overall run time per Gibbs iteration should still be sensible for small-to-moderate numbers of classes typically used with finite mixture models.

There is also potential for VWS methodology to be applied directly to multivariate settings. Rather than intervals which have been used in the univariate case, it may be necessary to partition along multiple dimensions---e.g., with hyperrectangles---in such settings. Generation of proposed draws from subsequent reweighted and truncated base distributions must then be practical for a usable sampler. This approach appears viable for some problems---in lower dimensional settings or where special structure exists---and may be an interesting area for future work.


\appendix

\section{Proofs of Propositions}
\label{sec:proofs}

This section provides proofs of Propositions~\ref{result:rejection-probability-bound-fmm} and
Proposition~\ref{result:total-variation}.

\begin{proof}[Proof of Proposition~\ref{result:rejection-probability-bound-fmm}]
The true rejection probability is
\begin{align*}
\frac{\psi_N - \psi}{\psi_N}
&= \frac{1}{\psi_N} \int_\Omega [h_0(x) - f_0(x)] d\nu(x) \nonumber \\
&= \frac{1}{\psi_N} \sum_{j=1}^N \int_\Omega \ind(x \in \mathscr{D}_j) [\overline{w}_j(x) - w(x)] g(x) d\nu(x) \nonumber \\
&\leq \frac{1}{\psi_N} \sum_{j=1}^N \int_\Omega \ind(x \in \mathscr{D}_j) [\overline{w}_j(x) - \underline{w}_j(x)] g(x) d\nu(x) \nonumber \\
&= \frac{1}{\psi_N} \left\{ \sum_{j=1}^N \overline{\xi}_j - \sum_{j=1}^N \underline{\xi}_j \right\},
\end{align*}
which is equivalent to \eqref{eqn:bound}.
\end{proof}

\begin{proof}[Proof of Proposition~\ref{result:total-variation}]
For any $B \in \mathcal{B}$,
\begin{align}
\int_B h(x) d\nu(x) - \int_B f(x) d\nu(x)
&= \frac{1}{\psi_N} \int_B h_0(x) d\nu(x) - \frac{1}{\psi} \int_B f_0(x) d\nu(x) \nonumber \\
&\leq \frac{1}{\psi_N} \left[ \int_B h_0(x) d\nu(x) - \int_B f_0(x) d\nu(x) \right] \nonumber \\
&= \frac{1}{\psi_N}
\sum_{j=1}^N \int_{B \cap \mathscr{D}_j} [\overline{w}_j(x) - w(x)] g(x) d\nu(x) \nonumber \\
&\leq \frac{1}{\psi_N}
\sum_{j=1}^N \int_{\mathscr{D}_j} [\overline{w}_j(x) - w(x)] g(x) d\nu(x) \nonumber \\
&= \frac{\psi_N - \psi}{\psi_N}
\label{eqn:norm-diff1}
\end{align}
and
\begin{align}
\int_B f(x) d\nu(x) - \int_B h(x) d\nu(x)
&= \frac{1}{\psi} \int_B f_0(x) d\nu(x) - \frac{1}{\psi_N} \int_B h_0(x) d\nu(x) \nonumber \\
&\leq \frac{1}{\psi} \int_B f_0(x) d\nu(x) - \frac{1}{\psi_N} \int_B f_0(x) d\nu(x) \nonumber \\
&= \frac{\psi_N - \psi}{\psi_N} \int_B f(x) d\nu(x) \nonumber \\
&\leq \frac{\psi_N - \psi}{\psi_N}.
\label{eqn:norm-diff2}
\end{align}
Combining \eqref{eqn:norm-diff1} and \eqref{eqn:norm-diff2} gives the result.
\end{proof}

\section{Sampling from a Univariate Proposal}
\label{sec:sampling-univariate-proposal}
The cumulative distribution function (CDF) associated with proposal distribution $h$ may be written as
\begin{align}
H(x) &= \sum_{j=1}^{\ell-1} \pi_j + \pi_\ell G_\ell(x), \quad
G_{\ell}(x) = \frac{\int_{(\alpha_{\ell-1}, x]} \overline{w}_\ell(s) g(s) d\nu(s)}{ \int_{(\alpha_{\ell-1}, \alpha_\ell]} \overline{w}_\ell(s) g(s) d\nu(s)}, \quad
\text{if $x \in (\alpha_{\ell-1}, \alpha_{\ell}]$}.
\label{eqn:cdf-quantile}
\end{align}
To obtain the quantile function, suppose $\varphi \in [0,1]$ is a desired quantile and $\ell$ is the index such that $\sum_{j=1}^{\ell-1} \pi_j < \varphi \leq \sum_{j=1}^{\ell} \pi_j$; then
\begin{align}
&\varphi \leq H(x) \equiv \sum_{j=1}^{\ell-1} \pi_j + \pi_\ell G_\ell(x) \nonumber \\
&\iff \quad
G_\ell(x) \geq 
\frac{
\varphi - \sum_{j=1}^{\ell-1} \pi_j}{
\pi_\ell
} = 
\frac{1}{\overline{\xi}_\ell} \left[ \varphi \sum_{j=1}^{N} \overline{\xi}_j -
\sum_{j=1}^{\ell-1} \overline{\xi}_j \right].
\label{eqn:prep-quantile}
\end{align}
Therefore, the quantile function associated with $H$ is
\begin{align}
H^{-}(\varphi) &= \inf \{ x \in \Omega : H(x) \geq \varphi \} \nonumber \\
&= \inf \left\{ x \in \Omega : G_\ell(x) \geq \frac{1}{\overline{\xi}_\ell} \left[ \varphi \sum_{j=1}^{N} \overline{\xi}_j - \sum_{j=1}^{\ell-1} \overline{\xi}_j \right] \right\} \nonumber \\
&= G_\ell^{-}\left(
\frac{1}{\overline{\xi}_\ell} \left[ \varphi \sum_{j=1}^{N} \overline{\xi}_j - \sum_{j=1}^{\ell-1} \overline{\xi}_j \right]
\right).
\label{eqn:quantile}
\end{align}
Computations involving $H$ and $H^{-}$ can be facilitated by precomputing the cumulative sums
\(
H(x_\ell) = \sum_{j=1}^{\ell} \pi_j
\)
for $\ell = 1, \ldots, N$. Then, for example, a binary search can be carried out to find the smallest $\ell$ such that $H(x_\ell) \geq \varphi$. Variates from $h$ can be generated with the inverse CDF method as $x = H^{-}(u)$ where $u$ is a draw from $\text{Uniform}(0,1)$.

In the case of the constant majorizer discussed in Example~\ref{example:vws-constant}, the CDF in \eqref{eqn:cdf-quantile} simplifies to
\begin{align*}
G_{\ell}(x) = \frac{G(x) - G(\alpha_{\ell-1})}{G(\alpha_\ell) - G(\alpha_{\ell-1})},
\quad \text{if $x \in (\alpha_{\ell-1}, \alpha_{\ell}]$}
\end{align*}
and \eqref{eqn:prep-quantile} simplifies to
\begin{align*}
&G_\ell(x) \geq
\frac{1}{\overline{\xi}_\ell} \left[ \varphi \sum_{j=1}^{N} \overline{\xi}_j -
\sum_{j=1}^{\ell-1} \overline{\xi}_j \right] \\
&\iff \frac{G(x) - G(\alpha_{\ell-1})}{G(x_\ell) - G(\alpha_{\ell-1})} \geq
\frac{
\varphi \sum_{j=1}^{N} \overline{w}_j [G(\alpha_j) - G(\alpha_{j-1})] -
\sum_{j=1}^{\ell-1} \overline{w}_j [G(\alpha_j) - G(\alpha_{j-1})]
}{
\overline{w}_\ell [G(\alpha_\ell) - G(\alpha_{\ell-1})]
} \\
&\iff G(x) \geq G(\alpha_{\ell-1}) +
\frac{1}{\overline{w}_\ell} \left[
\varphi \sum_{j=1}^{N} \overline{w}_j [G(\alpha_j) - G(\alpha_{j-1})] -
\sum_{j=1}^{\ell-1} \overline{w}_j [G(\alpha_j) - G(\alpha_{j-1})]
\right].
\end{align*}
Therefore, the expression for the $\varphi$ quantile of $H$ in \eqref{eqn:quantile} can be expressed as a quantile of base distribution $G$.

Section~\ref{sec:design} discussed several conditions at extreme endpoints of $\Omega = (\alpha_0, \alpha_N]$ which can complicate use of the constant or linear majorizers. A simple workaround involves truncating $\Omega$ to exclude such endpoints; however, the following remark briefly discusses another approach which can be used to avoid such truncation. 

\begin{remark}
\label{remark:infinite-log-weight}
Suppose region $\mathscr{D}_1 = (\alpha_0, \alpha_1]$ has an infinite value of $w(\alpha_0)$---i.e., at the left endpoint of the support---so that the constant and linear majorizers of Example~\ref{example:vws-constant} and Section~\ref{sec:vws-linear}, respectively, cannot be used with it. Rejection sampling can be modified so that an immediate rejection occurs if $\mathscr{D}_1$ is selected when drawing from $h$. In particular, define
\begin{displaymath}
h^{(-1)}(x) = \sum_{j=2}^N \pi_j^{(-1)} g_j(x), \quad
h_0^{(-1)}(x) = \sum_{j=2}^N \overline{w}_j(x) g(x) \ind\{ x \in \mathcal{D}_j \}, \quad
\psi_N^{(-1)} = \sum_{j=2}^N \overline{\xi}_j, \quad 
\pi_j^{(-1)} = \frac{\overline{\xi}_j}{\psi_N^{(-1)}},
\end{displaymath}
for $j = 2, \ldots, N$. A two stage rejection sampling algorithm first draws $\ell$ from $1, \ldots, N$ with probabilities $\pi_1, \ldots, \pi_N$ based on $h$. A rejection occurs at the first stage if $\ell = 1$. In this case, region $\mathscr{D}_1$ may be partitioned, say, into
$\mathscr{D}_1^{(1)} = (\alpha_0, \alpha_*]$ and
$\mathscr{D}_1^{(2)} = (\alpha_*, \alpha_1]$,
so that a constant or linear majorizer can be used on $\mathscr{D}_1^{(2)}$ in subsequent attempts. If $\ell \in \{ 2, \ldots, N\}$, $u$ and $x$ are drawn from $\text{Uniform}(0,1)$ and $h^{(-1)}$, respectively. The second stage accepts $x$ as a draw from target $f$ if $u \leq f_0(x) / h_0^{(-1)}(x)$. The rejection probability over both stages is
\begin{align}
\Prob(\text{Reject}) &= \Prob(\ell = 1) +
\Prob\Big(\ell > 1, U > f_0(X) / h_0^{(-1)}(X) \Big) \nonumber \\
&= \Prob(\ell = 1) +
\Prob\Big(U > f_0(X) / h_0^{(-1)}(X) \mid \ell > 1 \Big) 
\Prob(\ell > 1) \nonumber \\
&= \pi_1 +
\left( 1 - \pi_1 \right)
\left( 1 - \psi / \psi_N^{(-1)} \right). \label{eqn:composite}
\end{align}
To define $\overline{\xi}_1$ needed in the first stage to draw $\ell$, we may assume the trivial majorizer from Remark~\ref{remark:trivial-minorizer} so that $\overline{\xi}_1 = \int_{\mathscr{D}_\ell} w(x) g(x) d\nu(x)$. Bound \eqref{eqn:bound} may be applied to \eqref{eqn:composite} to obtain 
\begin{align*}
\Prob(\text{Reject}) \leq
\pi_1 +
\left( 1 - \pi_1 \right)
\left( 1 - \frac{1}{\psi_N^{(-1)}} \sum_{j=2}^N \underline{\xi}_j \right).
\end{align*}
\end{remark}

\section{Details for Constant Majorizer}
\label{sec:constant-majorizer-details}

We make use of several standard numerical optimization methods to majorize and minorize $w$. Consider maximizing $w$ on the interval $(a, b]$. When both endpoints are finite, Brent's method \citep{Brent1973} is used with $a$ and $b$ as bounds. Otherwise, the BFGS quasi-Newton method \citep[Section~6]{NocedalWright2006} is used to maximize $w(t(z))$ with respect to $z \in \mathbb{R}$; the transformation
\begin{align*}
t(z) =
\begin{cases}
z, & \text{if $a = -\infty$ and $b = \infty$}, \\
a + \exp(z), & \text{if $a > \infty$ and $b = \infty$}, \\
b - \exp(-z), & \text{if $a = -\infty$ and $b < \infty$},
\end{cases}
\end{align*}
ensures that $t(z) \in (a,b)$. The function $w$ is also checked explicitly at endpoints $a$ and $b$, as the maximum may occur at those points with an infinite value or which is otherwise not a critical point. A similar combination of methods is used in numerical minimization.

\section{Details for Linear Majorizer}
\label{sec:linear-majorizer-details}

The expansion point $c$ described in Section~\ref{sec:vws-linear} may be chosen to yield a small upper bound \eqref{eqn:linear-majorizer} over $x \in \mathscr{D}_j$ in some sense. The criterion we use for a majorizer in the log-concave case is to minimize the L1 distance between unnormalized densities $h_0$ and $f_0$ on $\mathscr{D}_j$,
\begin{align}
c^* &= \argmin_{c \in \mathscr{D}_j} \int_{\mathscr{D}_j} | h_0(x) - f_0(x) | d\nu(x) \nonumber \\
&= \argmin_{c \in \mathscr{D}_j} \int_{\mathscr{D}_j} [\overline{w}_j(x) - w(x) ] g(x) d\nu(x) \nonumber \\
&= \argmin_{c \in \mathscr{D}_j} \Big\{ w(c) \exp\{ -c \nabla(c) \} \int_{\mathscr{D}_j} \exp\{ x \nabla(c) \} g(x) d\nu(x) \Big\} \nonumber \\
&= \argmin_{c \in \mathscr{D}_j} \Big\{ w(c) \exp\{ -c \nabla(c) \} \Prob(T \in \mathscr{D}_j) M_j(\nabla(c)) \Big\} \nonumber \\
&= \argmin_{c \in \mathscr{D}_j} \Big\{ \log w(c) - c \nabla(c) + \log M_j(\nabla(c)) \Big\},
\label{eqn:concave-majorize-choice}
\end{align}
where $M_j(s) = \int_a^b e^{xs} g(x) d\nu(x)$ is the moment generating function of random variable $T$ whose density is $g$ with support truncated to $(\alpha_{j-1}, \alpha_j]$. Similar to \eqref{eqn:concave-majorize-choice}, a choice of $c$ for the minorizer in the log-convex case can be obtained from
\begin{align}
c^* &= \argmin_{c \in \mathscr{D}_j} \int_{\mathscr{D}_j} [w(x) - \underline{w}_j(x) ] g(x) d\nu(x) \nonumber \\
&= \argmax_{c \in \mathscr{D}_j} \int_{\mathscr{D}_j} \underline{w}_j(x) g(x) d\nu(x) \nonumber \\
&= \argmax_{c \in \mathscr{D}_j} \Big\{ \log w(c) - c \nabla(c) + \log M_j(\nabla(c)) \Big\}.
\label{eqn:convex-minorize-choice}
\end{align}
Criteria \eqref{eqn:concave-majorize-choice} and \eqref{eqn:convex-minorize-choice} are utilized in Section~\ref{sec:vmf} where optimization is carried out numerically via Brent's method \citep{Brent1973} within a bounded support; expressions for the function $M_j(s)$ are given by 
\begin{align*}
M_j(s)
= \frac{\kappa}{s + \kappa} \frac{
e^{ (s + \kappa) \alpha_j } - e^{ (s + \kappa) \alpha_{j-1} }
}{
e^{\kappa \alpha_j} - e^{\kappa \alpha_{j-1}}
}
\end{align*}
for the setting in Example~\ref{example:doubly-truncated-exp} and 
\begin{align*}
M_j(s)
= \frac{
e^{ s \alpha_j } - e^{ s \alpha_{j-1} }
}{
s (\alpha_j - \alpha_{j-1})
}
\end{align*}
for the setting in Example~\ref{example:uniform}.

\section{Additional Results for VMF Applications}
\label{sec:vmf-details}

\subsection{Generation of Variates}
\label{sec:vmf-generation-extra}

Several additional displays are given here to accompany the discussion in Section~\ref{sec:vmf-generation}. Figure~\ref{fig:vmf-simple-reweighted} compares the simple proposal of Example~\ref{example:simple-proposal} with the target for the case $\kappa = 10$ and $d = 10$; it is apparent that $h_0$ is not an efficient majorizer for $f_0$ as $x$ increases beyond 0.5. Figure~\ref{fig:vmf-target} displays several cases of the target density $f$ with $\kappa \in \{ 0.1, 10 \}$ and $d \in \{ 2, 3, 4 \}$. Figure~\ref{fig:vmf-proj2d} displays 50,000 draws of a three-dimensional VMF distribution with $\kappa \in \{ 0.1, 10 \}$ constructed from variates from rejection sampling on $f$.

In this setting, it is convenient to construct a VS proposal with a linear majorizer. This is formulated in the following example.

\begin{example}[Linear VS]
\label{example:vmf-linear-vs}
Consider the decomposition of $f$ in Example~\ref{example:vmf-constant-vs}. Here,
\begin{math}
\frac{d^2}{dx^2} \log w(x) =  -(d-3) \frac{1 + x^2}{(1 - x^2)^2}
\end{math}
so that $w$ is log-convex, log-concave, or constant under the same conditions in Example~\ref{example:vmf-linear-vws}. Example~\ref{example:uniform} shows that mixture component $g_j$ in the proposal corresponds to distribution $\text{Exp}_{(\alpha_{j-1}, \alpha_j)}(\overline{\beta}_{j1})$ and gives expressions for $\overline{\xi}_j$ and $\underline{\xi}_j$. As in Example~\ref{example:vmf-linear-vws}, we may instead opt to use the trivial minorizer to compute $\underline{\xi}_j$ and obtain a tighter bound in \eqref{eqn:bound}.
\end{example}

The following result shows that the proposals in Examples~\ref{example:vmf-linear-vws} and \ref{example:vmf-linear-vs} are equivalent in practice. This equivalence is specific to the present setting where the two weight functions differ only by a linear term.

\begin{proposition}
\label{prop:linear-vmf}
The linear VWS and linear VS constructions given in Examples~\ref{example:vmf-linear-vws} and \ref{example:vmf-linear-vs}, respectively, yield the same acceptance ratios and the same rejection probabilities.
\end{proposition}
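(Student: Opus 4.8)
The plan is to show that the two constructions produce the same unnormalized proposal $h_0$ and the same unnormalized target $f_0$ \emph{up to one common positive constant} that depends only on $\kappa$ and on the outer endpoints $\alpha_0,\alpha_N$; since the acceptance ratio is $f_0(x)/h_0(x)$ (with $M=1$) and the rejection probability is $1-\psi/\psi_N$ with $\psi=\int_\Omega f_0$, $\psi_N=\int_\Omega h_0$, any such common rescaling cancels and the two quantities are identical.

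First I would refer both decompositions to the common raw target $r(x) = (1-x^2)^{(d-3)/2} e^{\kappa x}\,\ind(\alpha_0 < x \le \alpha_N)$ from \eqref{eqn:vmf-target}. In Example~\ref{example:vmf-linear-vws} one has $f_0^{\text{VWS}} = w\,g$ with $w(x)=(1-x^2)^{(d-3)/2}$ and $g$ the $\text{Exp}_{(\alpha_0,\alpha_N)}(\kappa)$ density, so $f_0^{\text{VWS}}(x) = \{\kappa/(e^{\kappa\alpha_N}-e^{\kappa\alpha_0})\}\,r(x)$; in Example~\ref{example:vmf-linear-vs} one has $f_0^{\text{VS}} = \tilde w\,\tilde g$ with $\tilde w(x)=(1-x^2)^{(d-3)/2}e^{\kappa x}$ and $\tilde g$ uniform on $(\alpha_0,\alpha_N)$, so $f_0^{\text{VS}}(x) = \{1/(\alpha_N-\alpha_0)\}\,r(x)$. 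Hence $f_0^{\text{VS}} = \gamma\, f_0^{\text{VWS}}$ with the partition-free constant $\gamma = (e^{\kappa\alpha_N}-e^{\kappa\alpha_0})/\{\kappa(\alpha_N-\alpha_0)\}>0$.

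Next I would match the linear majorizers region by region. Since $\log\tilde w(x) = \log w(x) + \kappa x$, the log-derivatives satisfy $\tilde\nabla(x) = \nabla(x) + \kappa$. In the log-concave case ($d>3$) the majorizer is the tangent line whose expansion point is chosen by criterion \eqref{eqn:concave-majorize-choice}; substituting $\tilde\nabla = \nabla + \kappa$ one checks $\log\tilde w(c) - c\,\tilde\nabla(c) = \log w(c) - c\,\nabla(c)$, and, using the moment generating functions of Appendix~\ref{sec:linear-majorizer-details}, that the truncated-uniform MGF evaluated at $\nabla(c)+\kappa$ is a $c$-free multiple of the truncated-exponential MGF evaluated at $\nabla(c)$; thus the VS objective in \eqref{eqn:concave-majorize-choice} differs from the VWS one only by an additive constant in $c$ and is minimized at the same $c^*\in\mathscr{D}_j$. (In the log-convex case $d<3$ the majorizer is the endpoint chord and no expansion point is involved, so the matching is immediate.) Evaluating the coefficients in \eqref{eqn:linear-majorizer} at this common $c^*$ gives $\overline{\beta}_{j0}^{\text{VS}} = \overline{\beta}_{j0}^{\text{VWS}}$ and $\overline{\beta}_{j1}^{\text{VS}} = \overline{\beta}_{j1}^{\text{VWS}} + \kappa$, whence on each $\mathscr{D}_j$ one has $\overline{w}_j^{\text{VS}}(x)\,\tilde g(x) = \gamma\,\overline{w}_j^{\text{VWS}}(x)\,g(x)$; in particular the mixture components coincide, $\text{Exp}_{(\alpha_{j-1},\alpha_j]}(\overline{\beta}_{j1}^{\text{VS}}) = \text{Exp}_{(\alpha_{j-1},\alpha_j]}(\kappa + \overline{\beta}_{j1}^{\text{VWS}})$, matching Examples~\ref{example:doubly-truncated-exp} and \ref{example:uniform}. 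Summing over $j$ yields $h_0^{\text{VS}} = \gamma\, h_0^{\text{VWS}}$ pointwise. Combined with $f_0^{\text{VS}} = \gamma\, f_0^{\text{VWS}}$, this makes $f_0(x)/h_0(x)$ and $1-\psi/\psi_N$ identical across the two constructions. I would also remark that the trivial minorizer used in both examples (Remark~\ref{remark:trivial-minorizer}) gives $\underline{\xi}_j = \int_{\mathscr{D}_j} f_0$, which scales by $\gamma$ as well, so the contributions $\rho_\ell$, the bound \eqref{eqn:bound}, and hence the knots produced by Algorithm~\ref{alg:refine} under a common random seed also agree.

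The step requiring the most care is the invariance of the expansion-point criterion \eqref{eqn:concave-majorize-choice} under the reparametrization: it rests on the two cancellations $\log\tilde w(c) - c\,\tilde\nabla(c) = \log w(c) - c\,\nabla(c)$ and $\tilde M_j(\nabla(c)+\kappa) = \gamma_j\, M_j(\nabla(c))$ with $\gamma_j$ free of $c$. Everything else is an immediate consequence of the joint scaling $f_0^{\text{VS}} = \gamma\, f_0^{\text{VWS}}$, $h_0^{\text{VS}} = \gamma\, h_0^{\text{VWS}}$.
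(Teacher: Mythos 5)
Your proof is correct and rests on the same key computation as the paper's: the coefficient matching $\overline{\beta}_{j0}^{\text{VS}} = \overline{\beta}_{j0}^{\text{VWS}}$ and $\overline{\beta}_{j1}^{\text{VS}} = \overline{\beta}_{j1}^{\text{VWS}} + \kappa$ in both the log-concave (tangent) and log-convex (chord) cases. Where you differ is in how you package the conclusion and in one step you make explicit that the paper leaves implicit. The paper verifies the coefficient identities and then separately computes the acceptance ratio $f_0/h_0$ (watching the uniform and truncated-exponential normalizers cancel) and the acceptance probability $\psi/\psi_N$, arriving at the same closed-form expression \eqref{eqn:linear-vs-acceptance} under both factorizations. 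You instead observe once that $f_0^{\text{VS}} = \gamma f_0^{\text{VWS}}$ and $h_0^{\text{VS}} = \gamma h_0^{\text{VWS}}$ for a single partition-free constant $\gamma = (e^{\kappa\alpha_N}-e^{\kappa\alpha_0})/\{\kappa(\alpha_N-\alpha_0)\}$, after which both claims follow from scale invariance of $f_0/h_0$ and of $\psi/\psi_N$; this is cleaner and also yields for free your closing observations that $\underline{\xi}_j$, the contributions $\rho_\ell$, and hence Algorithm~\ref{alg:refine} agree. More substantively, you prove that the expansion points $c^*$ selected by criterion \eqref{eqn:concave-majorize-choice} coincide across the two factorizations, via the cancellation $\log\tilde w(c) - c\tilde\nabla(c) = \log w(c) - c\nabla(c)$ together with $\tilde M_j(\nabla(c)+\kappa)$ being a $c$-free multiple of $M_j(\nabla(c))$. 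The paper's proof simply uses a common $c_j$ in both constructions without justifying that the two criteria select the same point; since the constructions prescribe $c$ through Appendix~\ref{sec:linear-majorizer-details}, your verification closes a genuine (if small) gap rather than merely adding polish.
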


\begin{proof}
Let us write the coefficients in the present majorizer as $(\overline{\zeta}_{j0}, \overline{\zeta}_{j1})$ and the coefficients in the majorizer in Example~\ref{example:vmf-linear-vws} as $(\overline{\beta}_{j0}, \overline{\beta}_{j1})$. Let us first show that $\overline{\zeta}_{j0} \equiv \overline{\beta}_{j0}$ and $\overline{\zeta}_{j1} \equiv \overline{\beta}_{j1} + \kappa$. When $w$ is log-concave on $\mathcal{D}_j$,
\begin{align*}
\overline{\zeta}_{j1}
= -(d-3) \frac{c_j}{1 - c_j^2} + \kappa
= \overline{\beta}_{j1} + \kappa
\end{align*}
and
\begin{align*}
\overline{\zeta}_{j0} &= \log w(c_j) - c_j \overline{\zeta}_{j1} \\
&= \frac{d-3}{2} \log (1 - c_j^2) + \kappa c_j - c_j (\overline{\beta}_{j1} + \kappa) \\
&= \frac{d-3}{2} \log (1 - c_j^2) - c_j \overline{\beta}_{j1} \\
&= \overline{\beta}_{j0}.
\end{align*}
When $w$ is log-convex on $\mathcal{D}_j$,
\begin{align*}
\overline{\zeta}_{j1} &= \frac{
\log w(\alpha_j) - \log w(\alpha_{j-1})
}{
\alpha_j - \alpha_{j-1}
} \\
&= \frac{1}{\alpha_j - \alpha_{j-1}} 
\left[
\frac{d-3}{2} \log(1 - \alpha_j^2) + \kappa \alpha_j
- \frac{d-3}{2} \log(1 - \alpha_{j-1}^2) - \kappa \alpha_{j-1}
\right] \\
&= \overline{\beta}_{j1} + \kappa
\end{align*}
and
\begin{align*}
\overline{\zeta}_{j0}
&= \log w(\alpha_{j-1}) - \alpha_{j-1} \overline{\zeta}_{j1} \\
&= \frac{d-3}{2} \log(1 - \alpha_{j-1}^2) + \kappa \alpha_{j-1} - \alpha_{j-1} (\overline{\beta}_{j1} + \kappa) \\
&= \frac{d-3}{2} \log(1 - \alpha_{j-1}^2) - \alpha_{j-1} \overline{\beta}_{j1} \\
&= \overline{\beta}_{j0}.
\end{align*}
The acceptance ratio under Linear VS,
\begin{align*}
\frac{f_0(x)}{h_0(x)} &= \frac{
(1 - x^2)^{\frac{d-3}{2}} e^{\kappa x} \frac{1}{\alpha_N - \alpha_0}
\ind(\alpha_0 < x < \alpha_N)
}{
\sum_{j=1}^N \exp\{ \overline{\zeta}_{j0} + \overline{\zeta}_{j1} x \}
\frac{1}{\alpha_N - \alpha_0} \ind(\alpha_0 < x < \alpha_N)
} \\
&= \frac{
(1 - x^2)^{\frac{d-3}{2}}
}{
\sum_{j=1}^N \exp\{ \overline{\beta}_{j0} + \overline{\beta}_{j1} x \}
},
\end{align*}
is now seen to be equivalent to that of Linear VWS. Furthermore, under Linear VS,
\begin{align*}
\overline{\xi}_j
&= \frac{
\exp\{ \overline{\zeta}_{j0} \}
}{
(\alpha_N - \alpha_0) \overline{\zeta}_{j1}
}
\Big(
\exp\{ \overline{\zeta}_{j1} \cdot \alpha_j \} -
\exp\{ \overline{\zeta}_{j1} \cdot \alpha_{j-1} \}
\Big), \\
&= \frac{
\exp\{ \overline{\beta}_{j0} \}
}{
(\alpha_N - \alpha_0) (\overline{\beta}_{j1} + \kappa)
}
\Big(
\exp\{ (\overline{\beta}_{j1} + \kappa) \cdot \alpha_j \} -
\exp\{ (\overline{\beta}_{j1} + \kappa) \cdot \alpha_{j-1} \}
\Big),
\end{align*}
and 
\begin{align*}
\psi = \frac{f_0(x)}{f(x)}
= \frac{
\sqrt{\pi} \cdot I_{d/2-1}(\kappa) \cdot
\Gamma((d-1)/2)
}{
(\kappa / 2)^{d/2 - 1} (\alpha_N - \alpha_0)
}
\end{align*}
yields the acceptance probability
\begin{align}
\frac{\psi}{\psi_N} &=
\frac{
\sqrt{\pi} \cdot I_{d/2-1}(\kappa) \cdot \Gamma((d-1)/2)
}{
(\kappa / 2)^{d/2 - 1} (\alpha_N - \alpha_0)
}
\frac{1}{\sum_{j=1}^N \overline{\xi}_j} \nonumber \\
&= \frac{
\sqrt{\pi} \cdot I_{d/2-1}(\kappa) \cdot \Gamma((d-1)/2) (\kappa / 2)^{1 - d/2}
}{\sum_{j=1}^N 
\frac{
\exp\{ \overline{\beta}_{j0} \}
}{
(\overline{\beta}_{j1} + \kappa)
}
\Big(
\exp\{ (\overline{\beta}_{j1} + \kappa) \cdot \alpha_j \} -
\exp\{ (\overline{\beta}_{j1} + \kappa) \cdot \alpha_{j-1} \}
\Big)
}.
\label{eqn:linear-vs-acceptance}
\end{align}
The expression \eqref{eqn:linear-vs-acceptance} is also obtained under Linear VWS using $\overline{\xi}_j$ given in Example~\ref{example:doubly-truncated-exp} and $\psi$ given in \eqref{eqn:simple-norm-const}.
\end{proof}

Figure~\ref{fig:vmf-proposal} displays each of the four mixture proposals for $N = 5$ regions with $d = 2$ and $\kappa = 0.75$ along with the unnormalized target density. The interior knots $\alpha_1, \ldots, \alpha_4$ have been selected with equal spacing in all cases to aid the display. Overall rejection rates $1 - \psi / \psi_N$ of the constant VS, constant VWS, and linear VWS samplers are 97.43\%, 96.84\%, and 80.72\%, respectively. The contribution $\rho_\ell$ is displayed within each region. After refining to only $N = 5$ regions, it is apparent that linear VWS achieves a notably better rejection rate than the other two variations.

\begin{figure}
\centering
\begin{subfigure}{0.40\textwidth}
\includegraphics[width=\textwidth]{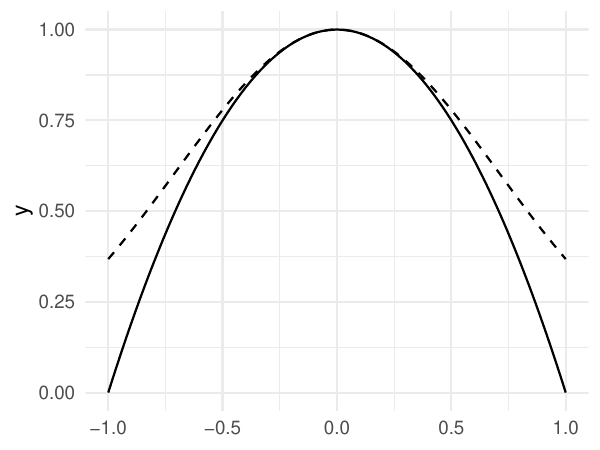}
\caption{}
\label{fig:vmf-simple-reweighted-weight-fns}
\end{subfigure}
\begin{subfigure}{0.40\textwidth}
\includegraphics[width=\textwidth]{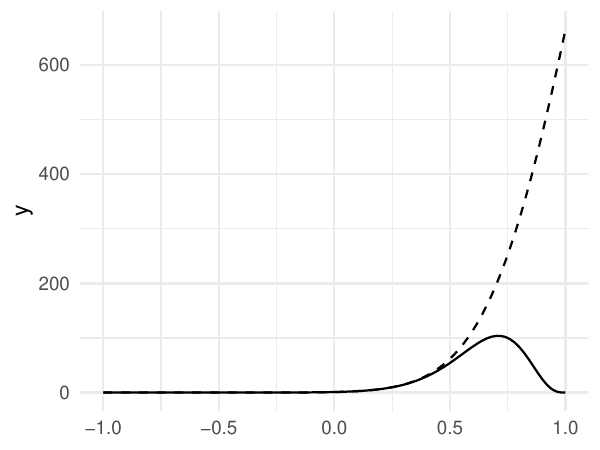}
\caption{}
\label{fig:vmf-simple-reweighted-densities}
\end{subfigure}
\caption{Simple proposal based on one region with $\kappa = 10$ and $d = 10$: (\subref{fig:vmf-simple-reweighted-weight-fns}) displays weight functions $w$ (solid) and $\overline{w}$ (dashed); (\subref{fig:vmf-simple-reweighted-densities}) displays unnormalized densities $f_0$ (solid) and $h_0$ (dashed).}
\label{fig:vmf-simple-reweighted}
\end{figure}

\begin{figure}
\centering
\begin{subfigure}{0.40\textwidth}
\includegraphics[width=\textwidth]{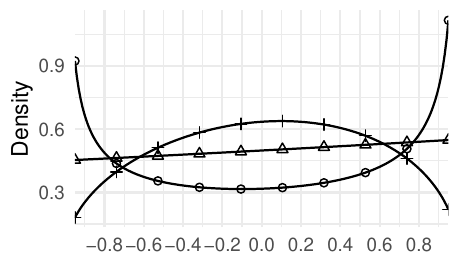}
\caption{$\kappa = 0.1$.}
\label{fig:vmf-target-kappa1}
\end{subfigure}
\begin{subfigure}{0.40\textwidth}
\includegraphics[width=\textwidth]{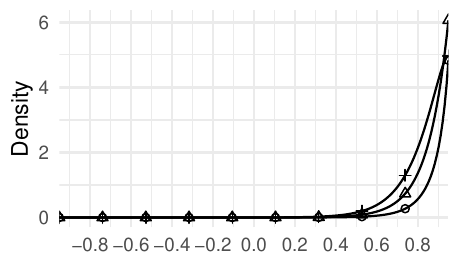}
\caption{$\kappa = 10$.}
\label{fig:vmf-target-kappa3}
\end{subfigure}
\caption{Density \eqref{eqn:vmf-target} used to draw from the VMF distribution: $d = 2$ ($\circ$), $d = 3$ ($\bigtriangleup$), and $d = 4$ ($+$).}
\label{fig:vmf-target}
\end{figure}

\begin{figure}
\centering
\begin{subfigure}{0.40\textwidth}
\includegraphics[width=\textwidth]{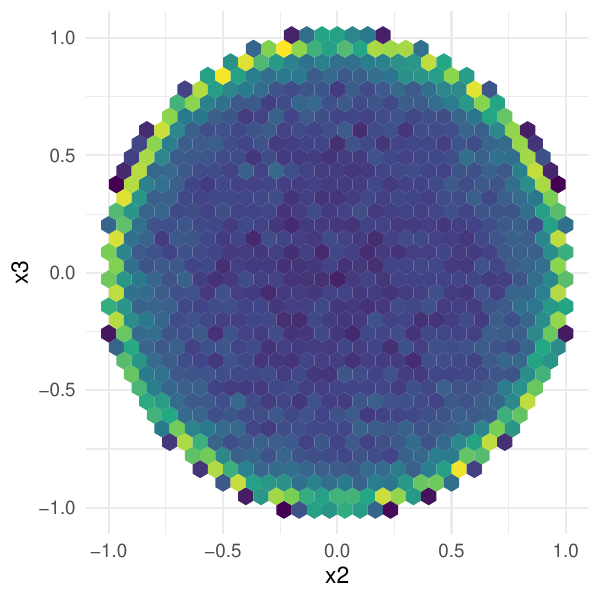}
\caption{$\kappa = 0.1$.}
\label{fig:vmf-proj2d-kappa1}
\end{subfigure}
\begin{subfigure}{0.40\textwidth}
\includegraphics[width=\textwidth]{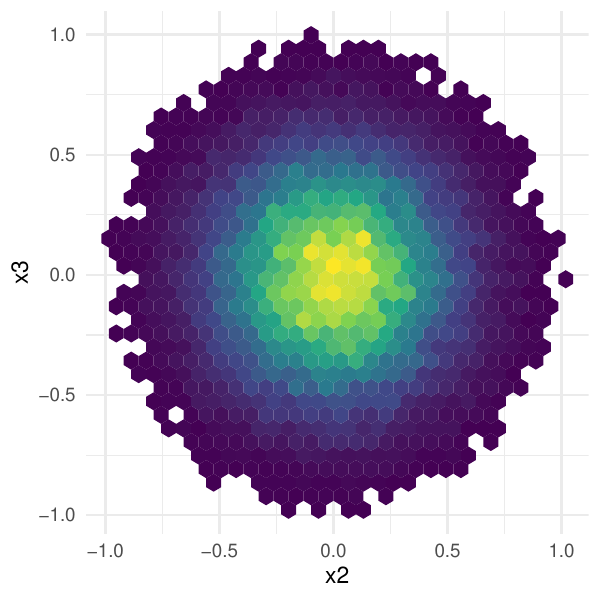}
\caption{$\kappa = 10$.}
\label{fig:vmf-proj2d-kappa3}
\end{subfigure}
\caption{Empirical distribution of 50,000 draws of $\vec{V}_0 \sim \text{VMF}_3(\vec{\mu}_0, \kappa)$, projected to the $x_2$--$x_3$ plane from $\vec{\mu}_0 = (1, 0, 0)$. Yellow bins contain a larger number of points while purple bins contain fewer points.}
\label{fig:vmf-proj2d}
\end{figure}

\begin{figure}
\centering
\begin{subfigure}{0.48\textwidth}
\includegraphics[width=\textwidth]{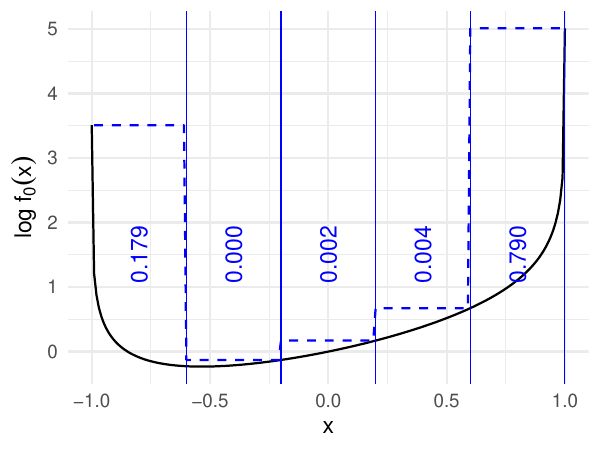}
\caption{VS.}
\label{fig:vmf-proposal-vs-const}
\end{subfigure}
\begin{subfigure}{0.48\textwidth}
\includegraphics[width=\textwidth]{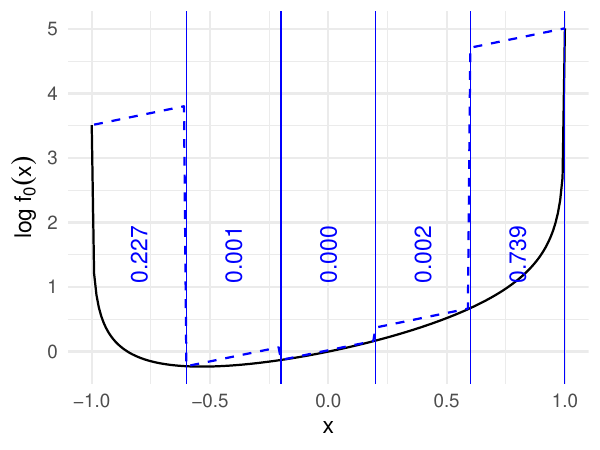}
\caption{Constant VWS.}
\label{fig:vmf-proposal-vws-const}
\end{subfigure}
\begin{subfigure}{0.48\textwidth}
\includegraphics[width=\textwidth]{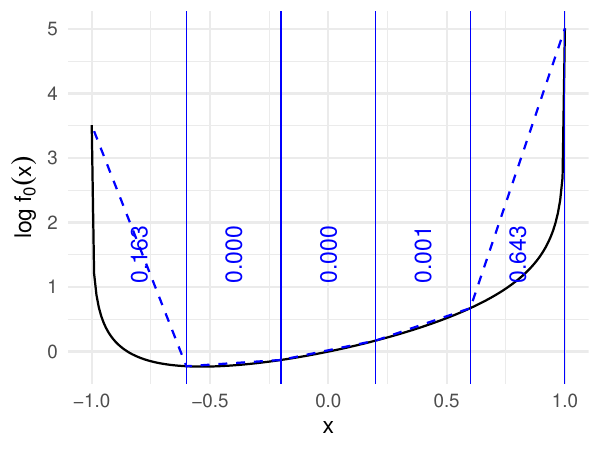}
\caption{Linear VWS.}
\label{fig:vmf-proposal-vws-linear}
\end{subfigure}
\caption{Unnormalized proposal log-density $\log h_0(x)$ for three VWS proposals (dashed blue curves) with $d = 2$, $\kappa = 0.75$ and $N = 5$ regions, and target log-density $\log f_0(x)$ (solid black curve). Solid horizontal blue lines are locations of interior knots $\alpha_1, \ldots, \alpha_4$. The value displayed within a region is its contribution $\rho_\ell$ to the rejection rate.}
\label{fig:vmf-proposal}
\end{figure}

\subsection{Approximate Computation of Probabilities Study}
\label{sec:vmf-probabilities-study}

This section presents a brief study of the approximation error and the upper bound obtained in Section~\ref{sec:vmf-probabilities}. Let us again consider the linear VWS proposal from Example~\ref{example:vmf-linear-vws}. Figure~\ref{fig:vmf-orthant} displays the realized approximation error $\Delta$ and its upper bound \eqref{eqn:vmf-orthant-bound}, on the log-scale, for $\kappa \in \{ 0.3, 1, 3\}$ and $d \in \{ 2, 4, 5 \}$ as $N$ increases from 1 to 100. The realized approximation error is often significantly smaller than the bound. Proposal $h$ is adapted using Algorithm~\ref{alg:refine} for each increment of $N$. Increases in $\Delta$ from increasing $N$ are possible due to refinements in $h$ which occur outside of event $A$. Such refinements make the error within $A$ relatively larger and might be avoided if $\Prob(\vec{V}_0 \in A)$ is the only aspect of $f$ of interest. With $N = 100$ regions, the largest error of the nine settings of $\kappa$ and $d$ is $\Delta = \exp(-8.754) \approx 1.58 \times 10^{-4}$ with $d = 2$ and $\kappa = 1$; for comparison, the value of the probability here is $\Prob(\vec{V}_0 \in A) \approx 0.3902$.

\begin{figure}
\centering
\begin{subfigure}{0.32\textwidth}
\includegraphics[width=\textwidth]{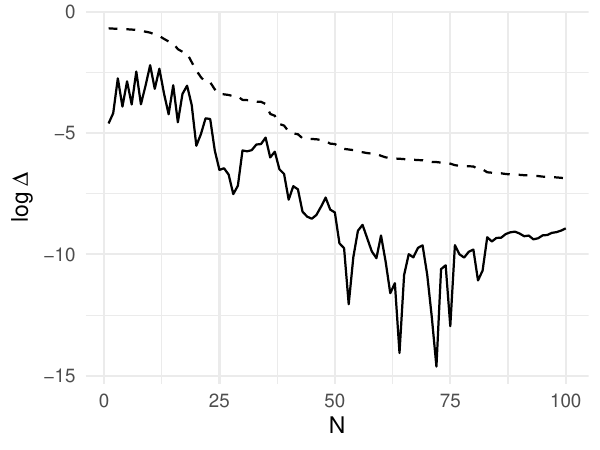}
\caption{$d = 2$, $\kappa = 0.3$.}
\label{fig:vmf-orthant-d1-kappa1}
\end{subfigure}
\begin{subfigure}{0.32\textwidth}
\includegraphics[width=\textwidth]{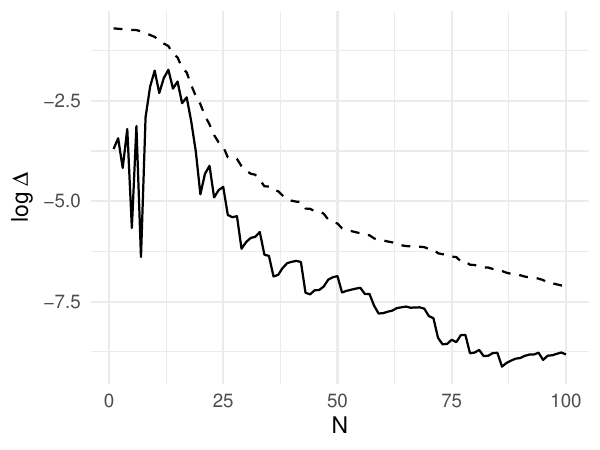}
\caption{$d = 2$, $\kappa = 1$.}
\label{fig:vmf-orthant-d1-kappa2}
\end{subfigure}
\begin{subfigure}{0.32\textwidth}
\includegraphics[width=\textwidth]{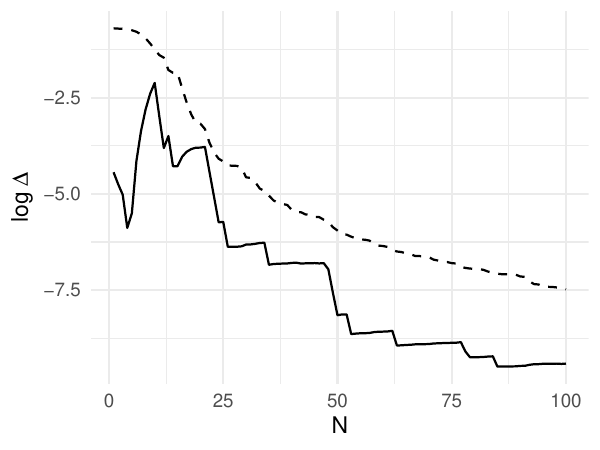}
\caption{$d = 2$, $\kappa = 3$.}
\label{fig:vmf-orthant-d1-kappa3}
\end{subfigure}

\begin{subfigure}{0.32\textwidth}
\includegraphics[width=\textwidth]{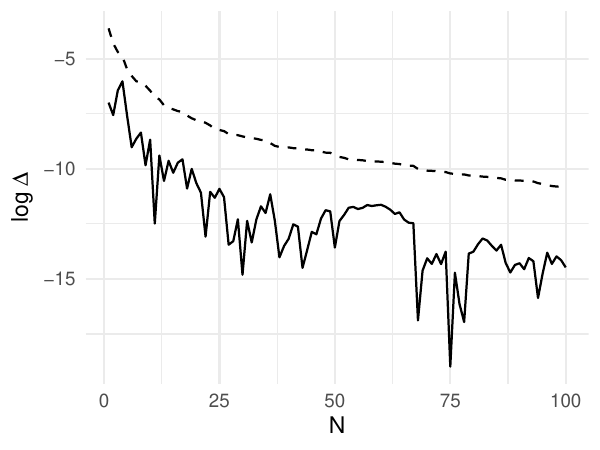}
\caption{$d = 4$, $\kappa = 0.3$.}
\label{fig:vmf-orthant-d2-kappa1}
\end{subfigure}
\begin{subfigure}{0.32\textwidth}
\includegraphics[width=\textwidth]{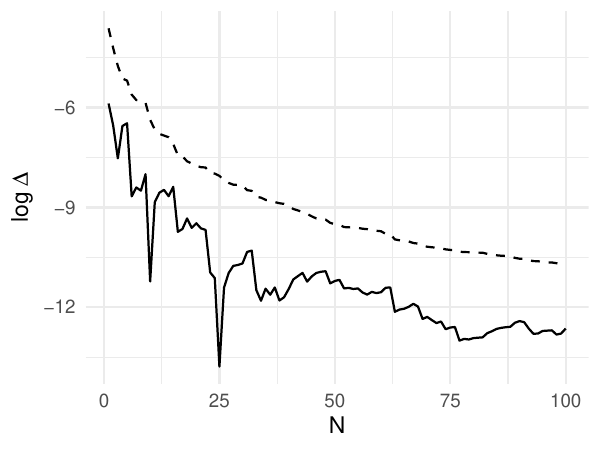}
\caption{$d = 4$, $\kappa = 1$.}
\label{fig:vmf-orthant-d2-kappa2}
\end{subfigure}
\begin{subfigure}{0.32\textwidth}
\includegraphics[width=\textwidth]{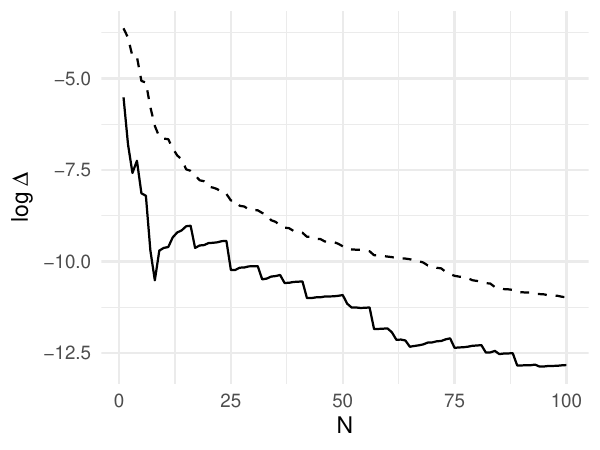}
\caption{$d = 4$, $\kappa = 3$.}
\label{fig:vmf-orthant-d2-kappa3}
\end{subfigure}

\begin{subfigure}{0.32\textwidth}
\includegraphics[width=\textwidth]{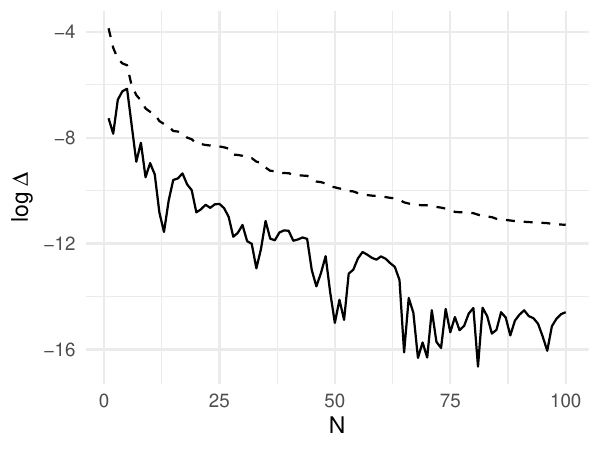}
\caption{$d = 5$, $\kappa = 0.3$.}
\label{fig:vmf-orthant-d3-kappa1}
\end{subfigure}
\begin{subfigure}{0.32\textwidth}
\includegraphics[width=\textwidth]{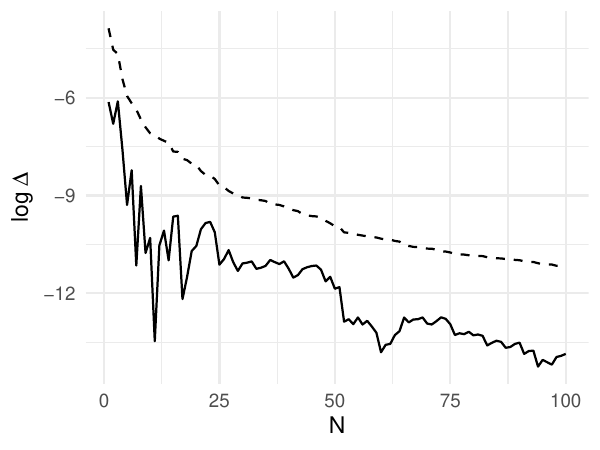}
\caption{$d = 5$, $\kappa = 1$.}
\label{fig:vmf-orthant-d3-kappa2}
\end{subfigure}
\begin{subfigure}{0.32\textwidth}
\includegraphics[width=\textwidth]{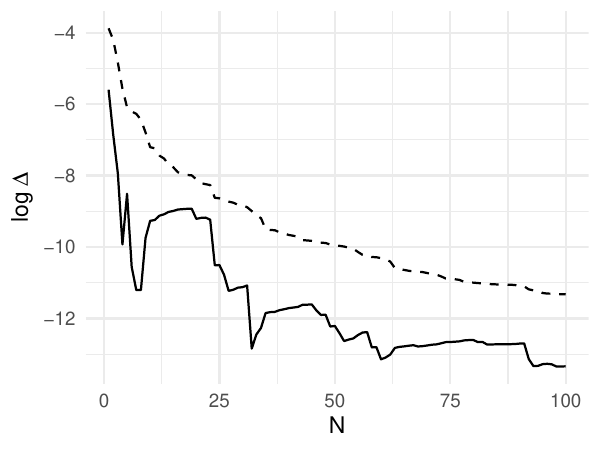}
\caption{$d = 5$, $\kappa = 3$.}
\label{fig:vmf-orthant-d3-kappa3}
\end{subfigure}
\caption{Log of approximation error $\Delta$ using linear VWS proposal (solid curve) versus upper bound (dashed curve) given in \eqref{eqn:vmf-orthant-bound}.}
\label{fig:vmf-orthant}
\end{figure}

\subsection{Knot Selection Study}
\label{sec:vmf-knot-selection-study}

The selection of knots $\alpha_1, \ldots, \alpha_{N-1}$ to partition a univariate domain $\Omega$ into regions $\mathscr{D}_j = (\alpha_{j-1}, \alpha_j]$ with $j = 1, \ldots, N$ for use in rejection sampling is an important consideration which may have a large impact on the rejection rate of the sampler. Algorithm~\ref{alg:refine} has been presented as a primary approach to knot selection. This method directly aims to minimize bound \eqref{eqn:bound} by sequentially bifurcating regions with probabilities proportional to their contributions $\rho_1, \ldots, \rho_N$. A ``greedy'' variation of Algorithm~\ref{alg:refine} can be obtained by instead selecting region $\ell = \argmax \{\rho_1, \ldots, \rho_{N_0 + j} \}$ on line~\ref{line:selection}. For Algorithm~\ref{alg:refine} and its greedy variant, the following remark describes how we have elected to bifurcate intervals which are not bounded.

\begin{remark}
In line~\ref{line:midpoint}, we assume the midpoint of $\mathscr{D}_\ell = (\alpha_{\ell-1}, \alpha_{\ell}]$ to be
\begin{align*}
\alpha^* =
\begin{cases}
0, & \text{if $\alpha_{\ell-1} = -\infty$ and $\alpha_\ell = \infty$}, \\
\alpha_\ell - |\alpha_\ell| - 1, & \text{if $\alpha_{\ell-1} = -\infty$ and $\alpha_\ell < \infty$}, \\
\alpha_{\ell-1} + |\alpha_{\ell-1}| + 1, & \text{if $\alpha_{\ell-1} > -\infty$ and $\alpha_\ell = \infty$}, \\
(\alpha_{\ell-1} + \alpha_\ell) / 2, & \text{otherwise},
\end{cases}
\end{align*}
when the target is a continuous distribution; i.e., the arithmetic midpoint when both endpoints are finite and zero when both are infinite. When one of the two is finite, $\alpha^*$ is taken to be a shifted version of that endpoint. A similar bifurcation is used when the target is a discrete distribution, but with $\alpha^* = \lceil (\alpha_{\ell-1} + \alpha_\ell) / 2 \rceil$ in the case that both endpoints are finite. If no values from the support are within $(\alpha_{\ell-1}, \alpha_\ell]$, the region should be excluded from further bifurcation; this follows from $\rho_\ell = 0$.
\end{remark}

Two other commonly considered approaches include using equally spaced knots and selecting knots to produce regions with equal probability \citep{Hormann2002}. Taking equally spaced knots is one of the simplest methods to implement, and may be carried out by selecting knots such that $|\mathscr{D}_j| = |\Omega| / N$ for $j = 1, \ldots, N$. Selecting knots with equal probability may be accomplished for a given $N$ by partitioning support $\Omega$ into $N$ regions with $\overline{\xi}_j \approx \psi / N$. Note that such knots are ``equal probability'' in the sense that regions $\mathscr{D}_1, \ldots, \mathscr{D}_N$ will be drawn with approximately equal probability during rejection sampling; however, their contributions $\rho_1, \ldots, \rho_N$ to bound \eqref{eqn:bound} will vary when $\underline{\xi}_1, \ldots, \underline{\xi}_N$ vary. Algorithm~\ref{alg:refine-equal-probs} presents a method to sequentially partition $\Omega$ in this way. Note that $\sum_{j=1}^N \overline{\xi}_j \geq \psi$ for any choice of knots; therefore, the $N$th region containing the ``remainder'' is likely to have an $\overline{\xi}_N$ which is somewhat larger than $\psi / N$. The minimization on line~\ref{line:equal-probs-opt} of Algorithm~\ref{alg:refine-equal-probs} is carried out numerically using Brent's algorithm \citep{Brent1973} in the present section, as the support $\Omega$ is bounded.

\begin{algorithm}
\caption{Rule of thumb for equal probability knot selection.}
\label{alg:refine-equal-probs}
\hspace*{\algorithmicindent}\textbf{Input}: number of regions $N$.
\begin{algorithmic}[1]
\State Let $\mathscr{D} \leftarrow \Omega$.
\For{$j = 1, \ldots, N-1$}
\State Let $\overline{\xi}(x) = \int_{\mathscr{D}} \ind(s \leq x) \overline{w}(s) g(s) d\nu(s)$.
\State Let $\alpha_j = \argmin_{x \in \mathscr{D}} [ \overline{\xi}(x) - \psi/N ]^2$ and $\overline{\xi}_j = \overline{\xi}(\alpha_j)$. \label{line:equal-probs-opt}
\State Let $\mathscr{D} \leftarrow \Omega \cap (\alpha_j, \infty]$.
\EndFor
\State \Return $(\alpha_1, \ldots, \alpha_{N-1})$.
\end{algorithmic}
\end{algorithm}

We present a small study to compare the performance of VWS with target density \eqref{eqn:vmf-target} using several knot selection methods. Nine combinations of $(\kappa, d)$ are used with $\kappa \in \{ 0.1, 10 \}$ and $d \in \{ 2, 3, 4 \}$ to mimic the setup in Section~\ref{sec:vmf-generation}. The study considers four knot selection methods including: probabilistic sequential selection via Algorithm~\ref{alg:refine} and its greedy variant, equally spaced knots, and equal probability knots via Algorithm~\ref{alg:refine-equal-probs}.

Figures~\ref{fig:vmf-knots-const} and \ref{fig:vmf-knots-linear} compare the four knot selection methods under constant and linear VWS, respectively. Both sets of plots evaluate the rejection rate from a single region up to 100 regions. Sequential partitioning was repeated 100 times with the probabilistic selection method; the pointwise median is plotted along with an interval (which appears as a thin blue band) based on the 2.5\% and 97.5\% quantiles. Each plot also includes rejection rates from the UW rejection sampler described in Section~\ref{sec:vmf-generation} as a baseline comparison, which have been computed empirically using 50,000 draws. 

In general, the probabilistic and greedy sequential selection approaches are more efficient than either the equally spaced or equal probability methods in both the constant and linear VWS cases. The equally spaced method has much lower efficiency than the equal probability method for $d=2$; however, for smaller $\kappa$ and larger $d$, equally spaced selection outperforms equal probability.

\begin{figure}
\centering
\begin{subfigure}{0.32\textwidth}
\includegraphics[width=\textwidth]{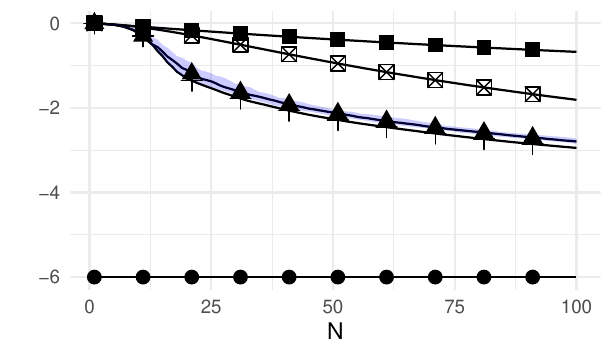}
\caption{$d = 2$, $\kappa = 0.1$.}
\label{fig:vmf-knots-constant-d1-kappa1}
\end{subfigure}
\begin{subfigure}{0.32\textwidth}
\includegraphics[width=\textwidth]{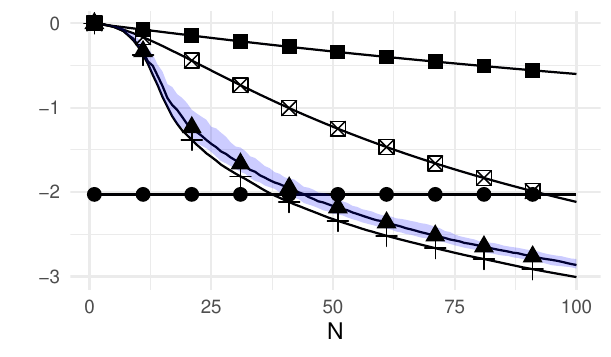}
\caption{$d = 2$, $\kappa = 1$.}
\label{fig:vmf-knots-constant-d1-kappa2}
\end{subfigure}
\begin{subfigure}{0.32\textwidth}
\includegraphics[width=\textwidth]{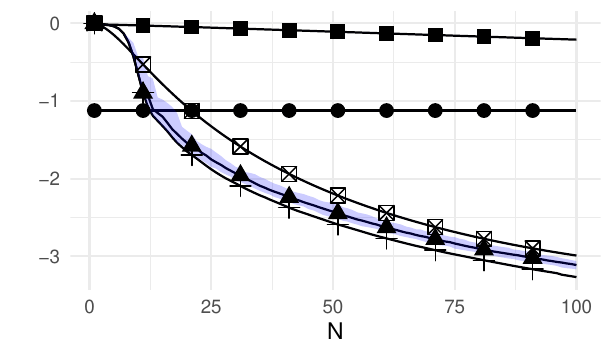}
\caption{$d = 2$, $\kappa = 10$.}
\label{fig:vmf-knots-constant-d1-kappa3}
\end{subfigure}

\begin{subfigure}{0.32\textwidth}
\includegraphics[width=\textwidth]{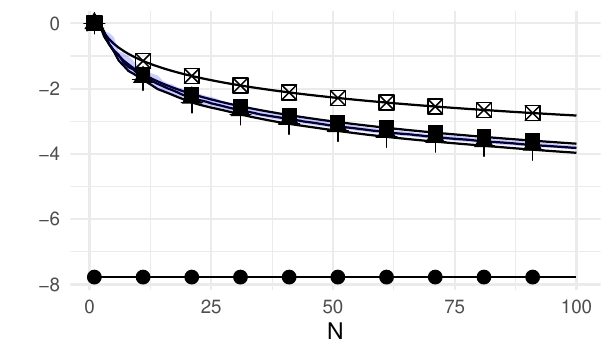}
\caption{$d = 4$, $\kappa = 0.1$.}
\label{fig:vmf-knots-constant-d2-kappa1}
\end{subfigure}
\begin{subfigure}{0.32\textwidth}
\includegraphics[width=\textwidth]{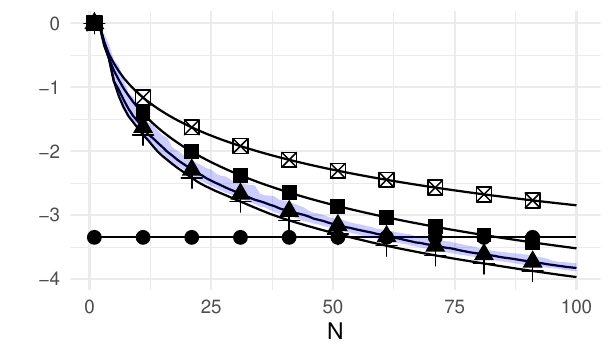}
\caption{$d = 4$, $\kappa = 1$.}
\label{fig:vmf-knots-constant-d2-kappa2}
\end{subfigure}
\begin{subfigure}{0.32\textwidth}
\includegraphics[width=\textwidth]{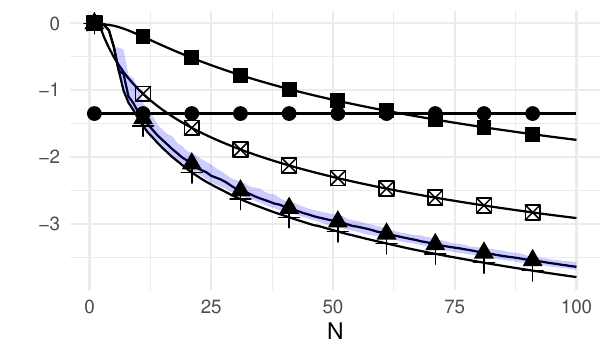}
\caption{$d = 4$, $\kappa = 10$.}
\label{fig:vmf-knots-constant-d2-kappa3}
\end{subfigure}
	
\begin{subfigure}{0.32\textwidth}
\includegraphics[width=\textwidth]{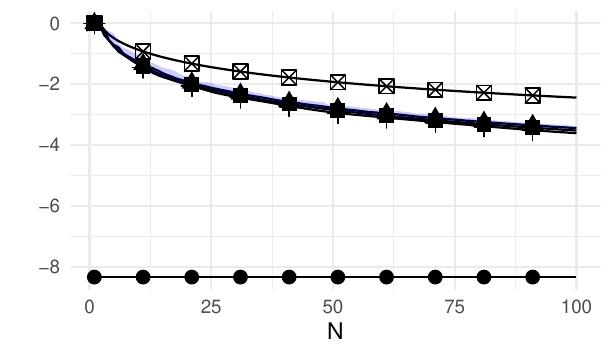}
\caption{$d = 5$, $\kappa = 0.1$.}
\label{fig:vmf-knots-constant-d3-kappa1}
\end{subfigure}
\begin{subfigure}{0.32\textwidth}
\includegraphics[width=\textwidth]{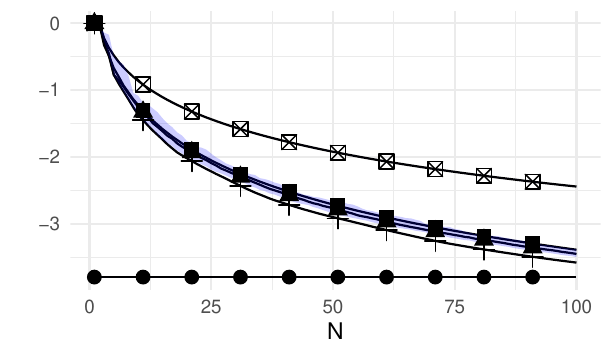}
\caption{$d = 5$, $\kappa = 1$.}
\label{fig:vmf-knots-constant-d3-kappa2}
\end{subfigure}
\begin{subfigure}{0.32\textwidth}
\includegraphics[width=\textwidth]{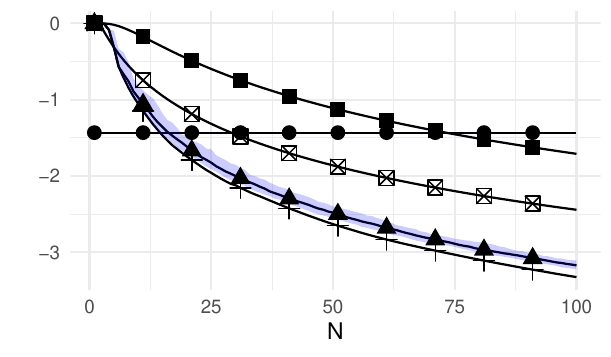}
\caption{$d = 5$, $\kappa = 10$.}
\label{fig:vmf-knots-constant-d3-kappa3}
\end{subfigure}
\caption{Log of rejection probability $\log(1 - \psi / \psi_N)$ using constant VWS under four knot selection methods: equally spaced ($\blacksquare$), equal probability ($\boxtimes$), probabilistic sequential ($\blacktriangle$), and greedy sequential ($+$). UW ($\CIRCLE$) is shown for reference.}
\label{fig:vmf-knots-const}
\end{figure}

\begin{figure}
\centering
\begin{subfigure}{0.32\textwidth}
\includegraphics[width=\textwidth]{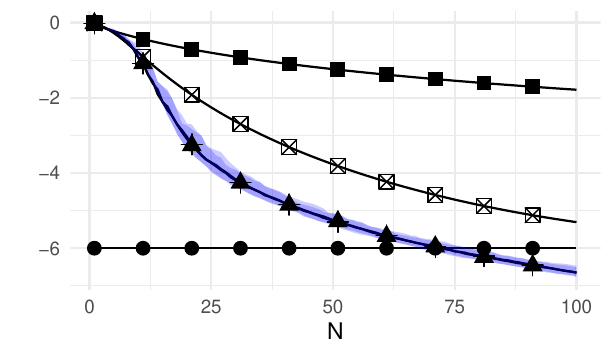}
\caption{$d = 2$, $\kappa = 0.1$.}
\label{fig:vmf-knots-linear-d1-kappa1}
\end{subfigure}
\begin{subfigure}{0.32\textwidth}
\includegraphics[width=\textwidth]{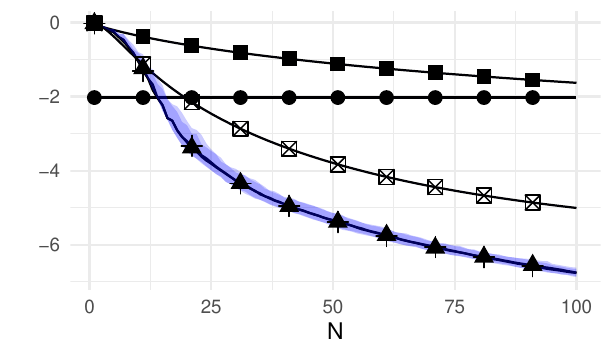}
\caption{$d = 2$, $\kappa = 1$.}
\label{fig:vmf-knots-linear-d1-kappa2}
\end{subfigure}
\begin{subfigure}{0.32\textwidth}
\includegraphics[width=\textwidth]{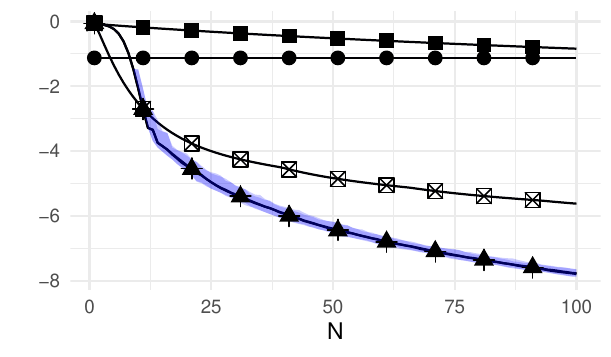}
\caption{$d = 2$, $\kappa = 10$.}
\label{fig:vmf-knots-linear-d1-kappa3}
\end{subfigure}

\begin{subfigure}{0.32\textwidth}
\includegraphics[width=\textwidth]{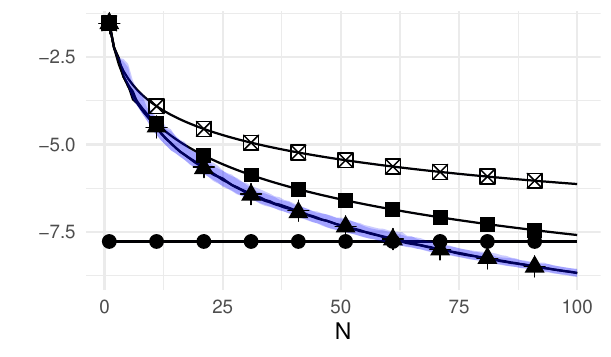}
\caption{$d = 4$, $\kappa = 0.1$.}
\label{fig:vmf-knots-linear-d2-kappa1}
\end{subfigure}
\begin{subfigure}{0.32\textwidth}
\includegraphics[width=\textwidth]{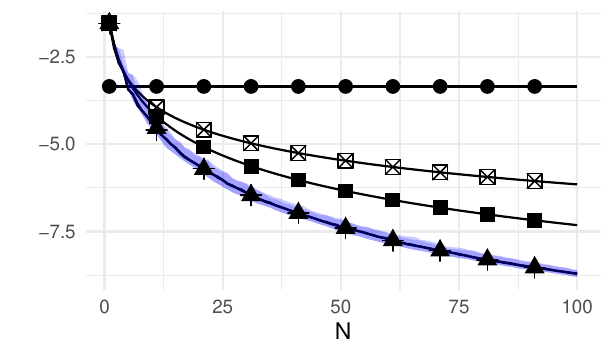}
\caption{$d = 4$, $\kappa = 1$.}
\label{fig:vmf-knots-linear-d2-kappa2}
\end{subfigure}
\begin{subfigure}{0.32\textwidth}
\includegraphics[width=\textwidth]{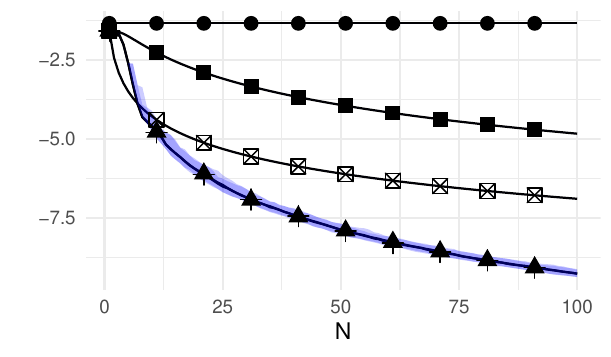}
\caption{$d = 4$, $\kappa = 10$.}
\label{fig:vmf-knots-linear-d2-kappa3}
\end{subfigure}

\begin{subfigure}{0.32\textwidth}
\includegraphics[width=\textwidth]{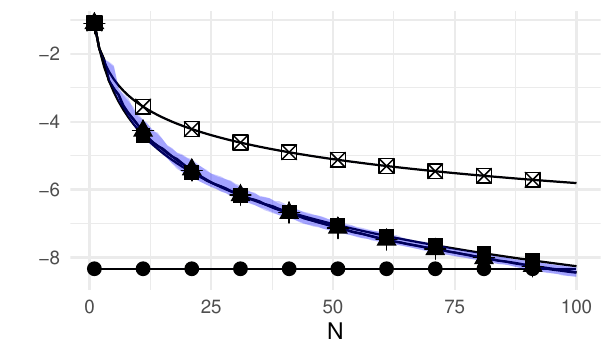}
\caption{$d = 5$, $\kappa = 0.1$.}
\label{fig:vmf-knots-linear-d3-kappa1}
\end{subfigure}
\begin{subfigure}{0.32\textwidth}
\includegraphics[width=\textwidth]{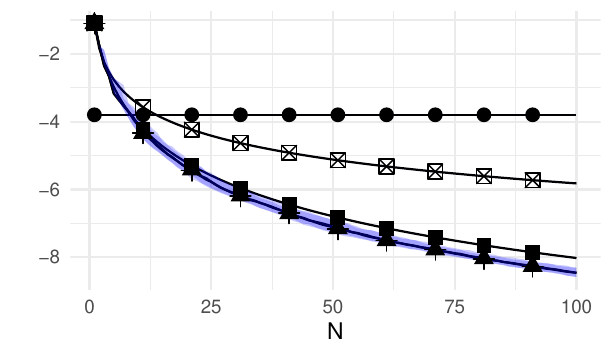}
\caption{$d = 5$, $\kappa = 1$.}
\label{fig:vmf-knots-linear-d3-kappa2}
\end{subfigure}
\begin{subfigure}{0.32\textwidth}
\includegraphics[width=\textwidth]{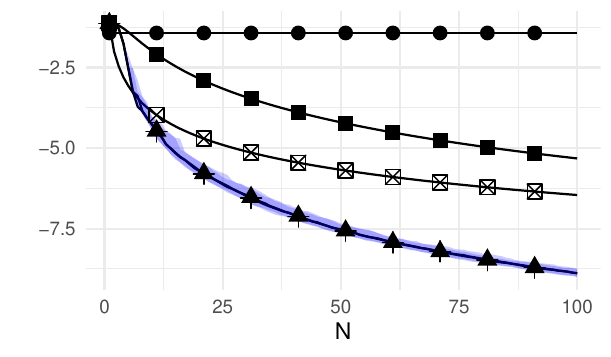}
\caption{$d = 5$, $\kappa = 10$.}
\label{fig:vmf-knots-linear-d3-kappa3}
\end{subfigure}
\caption{Log of rejection probability $\log(1 - \psi / \psi_N)$ using linear VWS under four knot selection methods: equally spaced ($\blacksquare$), equal probability ($\boxtimes$),  probabilistic sequential ($\blacktriangle$), and greedy sequential ($+$). UW ($\CIRCLE$) is shown for reference.}
\label{fig:vmf-knots-linear}
\end{figure}

\end{document}